\theoremstyle{definition}
\newtheorem{defn}{\protect\definitionname}
\theoremstyle{remark}
\newtheorem{claim}{\protect\claimname}
\theoremstyle{plain}
\newtheorem{thm}{\protect\theoremname}
\theoremstyle{remark}
\newtheorem{rem}{\protect\remarkname}
\theoremstyle{plain}
\newtheorem{cor}{\protect\corollaryname}
\theoremstyle{plain}
\newtheorem{fact}{\protect\factname}
\theoremstyle{plain}
\newtheorem{lem}{\protect\lemmaname}
\definecolor{purple}{RGB}{130, 40, 40}
\providecommand{\claimname}{Claim}
\providecommand{\corollaryname}{Corollary}
\providecommand{\definitionname}{Definition}
\providecommand{\factname}{Fact}
\providecommand{\lemmaname}{Lemma}
\providecommand{\remarkname}{Remark}
\providecommand{\theoremname}{Theorem}
\begin{document}
\title{Short-Term Investments and Indices of Risk}
\author{Yuval Heller\thanks{Department of Economics, Bar-Ilan University, Israel. yuval.heller@biu.ac.il.}
~and Amnon Schreiber\thanks{Corresponding author. Dept. of Economics, Bar-Ilan University, Israel.
amnon.schreiber@biu.ac.il.}\thanks{This paper replaces a working paper titled ``Instantaneous Decisions.''
We have benefited greatly from discussions with Sergiu Hart, Ilan
Kremer, Eyal Winter, and Pavel Chigansky, and from helpful comments
from seminar audiences at the Hebrew University of Jerusalem and Bar-Ilan
University. Yuval Heller is grateful to the European Research Council
for its financial support (starting grant \#677057). }}
\maketitle
\begin{abstract}
We study various decision problems regarding short-term investments
in risky assets whose returns evolve continuously in time. We show
that in each problem, all risk-averse decision makers have the same
(problem-dependent) ranking over short-term risky assets. Moreover,
in each problem, the ranking is represented by the same risk index
as in the case of CARA utility agents and normally distributed risky
assets.\textbf{}\\
\textbf{JEL Classification:} D81, G32.\textbf{ }\\
\textbf{Keywords}: Indices of riskiness, risk aversion, local risk,
Wiener process.
\end{abstract}

\section{Introduction}

We study various decision problems regarding investments in risky
assets (henceforth, gambles), such as whether to accept a gamble,
or how to choose the optimal capital allocation. To rank the desirability
of gambles with respect to the relevant decision problem, it is often
helpful to use an objective riskiness index that is independent of
any specific subjective utility. For example, an objective riskiness
index is needed when pension funds are required not to exceed a stated
level of riskiness (see, e.g., the discussion in \citealp[p. 812]{Aumann_Serrano_2008}).

We analyze four decision problems that are important in economic settings.
In general, different risk-averse agents rank the desirability of
gambles differently. However, our main result shows that in each of
these problems, all risk-averse agents have the same (problem-dependent)
ranking over short-term investments in risky assets whose returns
evolve continuously. Moreover, in each problem, the ranking is represented
by the same risk index obtained in the commonly used mean-variance
preferences (e.g., \citealp{markowitz_52}), which are induced by
CARA utility agents and normally distributed gambles.

\paragraph{Brief Description of the Model}

We consider an agent who has to make an investment decision related
to a gamble. We think of a gamble as the additive return on a financial
investment. We assume that the agent has (1) an initial wealth $w$,
and (2) a von Neumann\textendash Morgenstern utility $u$ that is
increasing and risk-averse (i.e, $u'>0$ and $u''<0$). We assume
that a gamble is represented by a random variable with (1) positive
expectation, and (2) some negative values in its support. For each
problem the agents' choices are modeled by a \emph{decision function}
that assigns a number to each agent and each gamble, where a higher
number is interpreted as the agent finding the gamble to be more attractive
(i.e., less risky) for the relevant decision problem.

We study four decision problems in the paper: (1) \emph{acceptance/rejection},
in which the agent faces a binary choice between accepting and rejecting
the gamble (e.g., \citealp{Hart_2011}); (2) \emph{capital allocation},
in which the agent has a continuous choice of how much to invest in
the gamble (e.g., \citealp{markowitz_52,Sharpe_64}); (3) the \emph{optimal
certainty equivalent}, in which the agent evaluates how much an opportunity
to invest in the gamble (according to the optimal investment level)
is worth to the agent (e.g., \citealp{hellman2018indexing}); and
(4) \emph{risk premium, }in which the agent evaluates how much investing
in the gamble is inferior to obtaining the gamble's expected payoff
(\citealp{Arrow_71}).\footnote{\label{fn:We-use-the}We use ``risk premium'' in its common acceptation
in the economic literature since \citet{Arrow_71}. In the financial
literature (and in practice), the ``risk premium'' of a security
commonly has a somewhat different meaning, namely, the security return
less the risk-free interest rate (e.g., \citealp{cochrane2009asset}).}

A\emph{ risk index }is a function that assigns to each gamble a nonnegative
number, which is interpreted as the gamble's riskiness. We say that
a risk index is \emph{consistent} with a decision function $f$ over
some set of agents and gambles, if each agent in the set ranks all
gambles in the set according to that risk index; that is, $f$ assigns
for each agent a higher value for gamble $g$ than for gamble $g'$
iff the risk index assigns a lower value to $g$. A \emph{risk-aversion}
index is a function that assigns to each agent a non-negative number,
which is interpreted as the agent's risk aversion. We say that a risk-aversion
index is \emph{consistent} with a decision function over some set
of agents and gambles, if for each gamble and each pair of agents
in the set, the agent with the higher index of risk-aversion invests
less in the gamble than the other agent. Observe that different decision
functions may correspond to different concepts of risks, and, may
induce different indices of risk and of risk aversion.

\paragraph{Summary of Results}

Agents, typically, have heterogeneous rankings of gambles, and, thus,
no risk index (nor risk-aversion index) can be consistent with the
rankings of all agents, unless one restricts the set of gambles. Our
main result restricts the set of gambles to assets whose returns evolve
continuously in time, where the local uncertainty is induced by a
Wiener process. Specifically, we focus on Ito processes which are
continuous-time Markov processes. The class of Ito processes, is commonly
used in economic and financial applications and includes, in particular,
the geometric Brownian motion and mean-reverting processes (e.g.,
\citealp{merton1992continuous}).\footnote{Sec. \ref{subsec:Continuous-Time-Processes-with} demonstrates that
our results \emph{cannot} be extended to continuous-time processes
with jumps.}

Our main result shows that in each of the four decision problems discussed
above, all agents rank all gambles in the same (problem-dependent)
way when they have to decide on short-term investments in gambles
whose returns evolve continuously in time. Moreover, the risk indices
that are consistent with these decision functions are the same as
in the classic model of agents with CARA (exponential) utilities and
normally distributed gambles. Specifically, we show that: (1) the
variance-to-mean index $Q_{VM}\left(g\right)=\frac{\sigma^{2}\left[g\right]}{\boldsymbol{E}\left[g\right]}$
is consistent with both the capital allocation function and the acceptance/rejection
function, (2) the inverse Sharpe index $Q_{IS}\left(g\right)=\frac{\sigma\left[g\right]}{\boldsymbol{E}\left[g\right]}$
is consistent with the optimal certainty equivalent function, and
(3) the standard deviation index $Q_{SD}\left(g\right)=\sigma\left[g\right]$
is consistent with the risk premium function. Finally, we adapt the
classic results of \citet{Pratt_64} and \citet{Arrow_71} to the
present setup, and show that the local Arrow\textendash Pratt coefficient
of absolute risk aversion $\rho\left(u,w\right)=\frac{-u''\left(w\right)}{u'\left(w\right)}$
is consistent with all four decision functions.

\paragraph{Related Literature and Contribution}

\citet{Aumann_Serrano_2008} and \citet{Foster_Hart_2009} presented
two ``objective'' indices of riskiness of gambles, which are independent
of the subjective utility of the agent. These indices are either based
on reasonable axioms that an index of risk should satisfy (e.g., \citealp{Artzner_at_al_99,Aumann_Serrano_2008,cherny_madan_2009,Foster_Hart_2011,schreiber_relative_2014,hellman2018indexing};
see also the recent survey of \citealp{follmer2015axiomatic}), or
they are based on an ``operative'' criterion such as an agent never
going bankrupt when relying on an index of risk in deciding whether
to accept a gamble (\citealp{Foster_Hart_2009}; and see also \citealp{meilijson_2009},
for a discussion of operative implication of \citeauthor{Aumann_Serrano_2008}'s
index of risk).\footnote{\citeauthor{Aumann_Serrano_2008}'s \citeyearpar{Aumann_Serrano_2008}
and \citeauthor{Foster_Hart_2009}'s \citeyearpar{Foster_Hart_2009}
indices of risk have been extended to gambles with an infinite support
(\citealp{homm2012operational,schulze2014existence,riedel2015foster})
and to gambles with unknown probabilities (\citealp{michaeli2014riskiness}).
These indices have been applied to study real-life investment strategies
in \citet{kadan2014performance,bali2015new,anand2016foster,leiss2018option}.}

We argue that risk is a multidimensional attribute that crucially
depends on the investment problem. Different aspects of risk are relevant
when an agent has to decide whether to accept a gamble, compared with
a situation in which an agent has to choose how much to invest in
a gamble, or has to evaluate the certainty equivalent of the optimal
investment. Many existing papers focus on a single decision function.
By contrast, we suggest a framework for studying various decision
problems, and associate each such problem with its relevant index
of risk. We believe that this general framework may be helpful in
future research on risk indices.

In general, different agents make different investment decisions,
based on the subjective utility of each agent. Thus, a single risk
index cannot be consistent with the choices of all agents, which,
arguably, limits the index's objectiveness (even when the index satisfies
appealing axioms or some operative criterion for avoiding bankruptcy).
However, our main result shows that in various important decision
problems, all agents rank all gambles in the same way when deciding
on short-term investments in gambles whose returns evolve continuously
in time. This finding enables us to construct objective risk indices
that are consistent with the short-term investment decisions of risk-averse
agents.

There are pairs of gambles for which all risk-averse agents agree
on which one of the gambles is more desirable. This happens if one
gamble second-order stochastically dominates the other gamble. However,
the well-known order of stochastic dominance (\citealp{Hadar_Russell_69,Hanoch_Levy_69,rothschild_stiglitz_1})
is only a partial order and ``most'' pairs of gambles are incomparable.
Interestingly, even if one gamble second-order stochastically dominates
another gamble, it is not sufficient for a uniform ranking among all
risk-averse agents in every decision problem (see, e.g., the analysis
of capital allocation decisions in \citealp{landsberger1993mean}).

A large body of literature uses the classic mean-variance capital
asset pricing model (\citealp{markowitz_52}; see \citealp{smetters2013sharper,kadan2016Liu}
for recent extensions). A well-known critique is that in a discrete-time
setup the mean-variance preferences are consistent with expected utility
maximization only under severe restrictions, such as CARA utilities
and normally distributed gambles (see, e.g., \citealp{borch1969note,feldstein1969mean,hakansson1971multi}).
By contrast, the seminal results of Robert Merton (as summarized and
discussed in \citealp{merton1975theory,merton1992continuous}) show
that in a continuous-time model with log-normally distributed asset
prices mean-variance preferences are consistent with the optimal portfolio
allocation of all risk-averse agents. Merton's results present an
important theoretical foundation for the classic model.

The present paper extends this idea by showing that the equivalence
between the decisions of agents with CARA utilities with respect to
normally distributed gambles and the decisions of risk-averse agents
with respect to short-term investments holds more broadly: (1) it
holds with respect to various decision functions beyond optimal portfolio
allocation, and (2) it holds with respect to a broad class of continuous-time
processes beyond log-normally distributed asset prices. On the other
hand, our analysis is less general than \citeauthor{merton1992continuous}'s
in that we analyze situations in which the agent acts only at the
beginning, at time zero, and cares about his wealth at a single future
nearby point $t>0$, rather than allowing the agent to act continuously
in time.

Our paper is also related to the literature on local risks. This literature
focuses on discrete-time gambles, rather than continuous-time returns
(which are the focus of the present paper). \citet{Pratt_64} shows
that if the distribution of the returns is sufficiently concentrated,
i.e., the third absolute central moment is sufficiently small relative
to variance, then for any agent, the magnitude of the risk premium
depends on the local level of the agent's risk aversion. \citet{Samuelson_70}
shows that classic mean-variance analysis (\citealp{markowitz_52}),
applies approximately to all utility functions in situations that
involve what he calls ``compact'' distributions. More recently,
\citet{schreiber2015note} shows that if one gamble is riskier than
another gamble according to the \citeauthor{Aumann_Serrano_2008}'s
index of risk, then every decision maker who is willing to accept
a small proportion of the riskier gamble is also willing to accept
the same proportion of the less risky gamble.

In this context, two papers are close to the present paper: \citet{shorrer_1}
and \citet{schreiber_acceptance_2013}. \citeauthor{shorrer_1} shows
that there exist risk indices that are consistent with the acceptance/rejection
decisions of all risk-averse agents with respect to bounded discrete
gambles with sufficiently small support. This result is similar to
our characterization of risk indices that are consistent with various
short-term investment decisions of all risk-averse agents with respect
to assets whose (possibly, unbounded) returns evolve continuously
in time. \citeauthor{shorrer_1}'s main result shows that by adding
a few additional axioms, one can uniquely choose \citeauthor{Aumann_Serrano_2008}'s
index among all the indices that are consistent with agents' acceptance/rejection
with respect to small discrete gambles.\footnote{\citet{shorrer_1} further applies analogous axioms in the related
setup in which an agent has to accept/reject an option to allocate
a certain amount of money in a multiplicative gamble, and other interesting
setups  that deal with acceptance/rejection of  cash flows and information
transactions.} In principle, one could apply a similar axiomatic method to our three
other decision functions; we leave this interesting research direction
for future research (for further discussion see Sec. \ref{sec:Discussion}).
Unlike the other papers mentioned above, \citet{schreiber_acceptance_2013}
deals with returns in the continuous-time setup. Specifically, he
analyzes acceptance and rejection of short-term investments. The key
contributions of the present paper with respect to \citet{schreiber_acceptance_2013}
consists in, first, extending the analysis to the other three decision
functions (namely, capital allocation, optimal certainty equivalent,
and risk premium) and, second, showing in all four cases an equivalence
to the indices in the exponential-normal setup.

\paragraph{Structure}

In Section \ref{sec:Model} we present our model. In Section \ref{sec:Normal-Distributions-=000026}
we analyze the benchmark setup of CARA utilities and normally distributed
gambles. In Section \ref{sec:Continuous-Investment-Decisions} we
adapt the model to study risky assets whose returns evolve continuously
in time, and present our main result. We conclude with a discussion
in Section \ref{sec:Discussion}. Appendix \ref{sec:Multiplicative-Gambles}
extends our model to multiplicative gambles. The formal proofs are
presented in Appendix \ref{sec:Proofs}.

\section{Model\label{sec:Model}}

We consider an agent who has to make an investment decision related
to a risky asset. We begin by defining each of these components: agent,
risky asset, and investment decision.

A \emph{decision maker} (or \emph{agent}) is modeled as a pair $\left(u,w\right)$,
where $u:\mathbb{R}\rightarrow\mathbb{R}$ is a twice continuously
differentiable von Neumann\textendash Morgenstern utility function
over wealth satisfying $u'>0$ (i.e., utility is increasing in wealth)
and $u''<0$ (i.e., risk aversion), and $w\in\mathbb{R}$ is an initial
wealth level. Let $\mathcal{DM}$ denote the set of all such decision
makers.

A \emph{gamble} $g$ is a real-valued random variable with a positive
expectation and some negative values (i.e., $0<\boldsymbol{E}\left[g\right]$,
and $\boldsymbol{P}\left[g<0\right]>0$). We think of a gamble as
the additive return on a risky investment; for example, if the initial
investment is $x$ dollars and the random payoff from the investment
is $y$ dollars, then the additive return $g\equiv y-x$ is a gamble.
Let $\mathcal{G}$ denote the set of all such gambles.

A \emph{decision function }$f:\mathcal{DM}\times\mathcal{G}\rightarrow\mathbb{R}$
is a function that assigns to each agent and each gamble a nonnegative
number, where a higher value is interpreted as the agent finding the
gamble to be more attractive (i.e., less risky) for the relevant investment
decision.

\subsection{Decision Functions}

We study four decision functions in the paper:
\begin{enumerate}
\item \emph{Acceptance/rejection}: We consider a situation in which an agent
faces a binary choice between accepting and rejecting the gamble.
Specifically, the acceptance function $f_{AR}:\mathcal{DM}\times\mathcal{G}\rightarrow\left\{ 0,1\right\} $
is given by
\[
f_{AR}\left(\left(u,w\right),g\right)=\begin{cases}
1 & \mathbb{E}\left[u\left(w+g\right)\right]\geq u\left({\color{purple}w}\right)\\
0 & \mathbb{E}\left[u\left(w+g\right)\right]<u\left({\color{purple}w}\right).
\end{cases}
\]
That is, $f_{AR}\left(\left(u,w\right),g\right)$ is equal to one
if accepting the gamble yields a weakly higher expected payoff than
rejecting it, and it is equal to zero otherwise. The acceptance function
has been used to study risk indices in various papers (e.g., \citealp{Foster_Hart_2009,Foster_Hart_2011}).
In particular, our analysis of this decision function extends the
analysis of \citet{schreiber_acceptance_2013}, by showing the similarity
between this function in the mean-variance setup and the corresponding
decision function in the continuous-time setup.
\item \emph{Capital allocation}: Second, we study a situation in which an
agent has a continuous choice of how much to invest in the gamble.
Specifically, the capital (or asset) allocation function $f_{CA}:\mathcal{DM}\times\mathcal{G}\rightarrow\mathbb{R}^{+}\cup\left\{ \infty\right\} $
is given by 
\begin{equation}
f_{CA}\left(\left(u,w\right),g\right)=\arg\max_{\alpha\in\mathbb{R}^{+}}\mathbb{E}\left[u\big(w+\alpha g\big)\right];\label{eq:f_CA}
\end{equation}
if (\ref{eq:f_CA}) does not admit of a solution (i.e., $\mathbb{E}\left[u\big(w+\alpha g\big)\right]$
is increasing for all $\alpha$-s), then we set $f_{CA}\left(\left(u,w\right),g\right)=\infty$.
That is, $f_{CA}\left(\left(u,w\right),g\right)$ is the optimal level
the agent $\left(u,w\right)$ chooses to invest in gamble $g$. An
investment level of zero is interpreted as no investment in the gamble.
An investment level in the interval $\left(0,1\right)$ is interpreted
as a partial investment in the gamble. An investment level of one
is interpreted as a total investment in the gamble (without leverage).
Finally, an investment level strictly greater than one is interpreted
as a more than total investment in the gamble (achieved, for example,
through high leverage). The capital allocation function is prominent
in classic analyses of riskiness of assets (e.g., \citealp{markowitz_52,Sharpe_64}),
and, more recently, it has been used to derive an \emph{incomplete}
ranking over the riskiness of gambles (\citealp{landsberger1993mean}).
\item \emph{The optimal certainty equivalent: }Third, we study a situation
in which an agent has to assess how much an opportunity to invest
in the gamble $g$ is worth to him (where we allow the agent to choose
his optimal investment level). Specifically, the optimal certainty
equivalent function $f_{CE}:\mathcal{DM}\times\mathcal{G}\rightarrow\mathbb{R}^{+}\cup\left\{ \infty\right\} $
is defined implicitly as the unique solution to the equation
\begin{equation}
u\left(w+f_{CE}\right)=\max_{\alpha\in\mathbb{R}^{+}}\mathbb{E}\left[u\big(w+\alpha{\color{purple}\cdot}g\big)\right];\label{eq:f_CE}
\end{equation}
if (\ref{eq:f_CE}) does not admit of a solution (which happens when
$\mathbb{E}\left[u\big(w+\alpha{\color{purple}\cdot}g\big)\right]$
is increasing for all $\alpha$-s), then we set $f_{CE}=\infty$.
That is, $f_{CE}\left(\left(u,w\right),g\right)$ is interpreted as
the certain gain for which the decision maker is indifferent between
obtaining this gain for sure and having an option to invest in the
gamble $g$, when the agent is allowed to optimally choose his investment
level in $g$. Observe that one can express the RHS in (\ref{eq:f_CE})
in terms of $f_{CA}$ and obtain the following equivalent definition
of $f_{CE}$ as the unique solution to the equation $u\left(w+f_{CE}\right)=\mathbb{E}\left[u\big(w+f_{CA}\left(\left(u,w\right),g\right)\cdot g\big)\right]$.
The function $f_{CE}$ has been studied axiomatically in \citet{hellman2018indexing}.
\item \emph{Risk premium}: Lastly, we study a situation in which the agent
has to decide between investing in the gamble and obtaining a certain
amount that is less than the gamble's expected payoff. Specifically,
the risk premium function $f_{RP}:\mathcal{DM}\times\mathcal{G}\rightarrow\mathbb{R}^{-}\cup\left\{ -\infty\right\} $
is defined implicitly as the unique solution to the equation
\begin{equation}
\mathbb{E}\left[u\left(w+g\right)\right]=u\left(w+E\left[g\right]+f_{RP}\right);\label{eq:f_RP}
\end{equation}
if such a solution does not exist then we set $f_{RP}\left(\left(u,w\right),g\right)=-\infty$
. That is, $f_{RP}\left(\left(u,w\right),g\right)$ is interpreted
as the negative amount that has to be added to the expected value
of the gamble, to make the agent indifferent between investing in
the gamble, and obtaining the gamble's expected payoff plus this negative
amount. Here we use the common acceptation of risk premium in the
economic literature (\citealp{Arrow_71}; see \citealp[Section 3.2]{kreps1990course},
for a textbook definition), which has a somewhat different meaning
in some of the finance literature (see Footnote \ref{fn:We-use-the}).
\end{enumerate}
In the main text we study additive gambles, in which the gamble's
realized outcome is added to the initial wealth. In Appendix \ref{sec:Multiplicative-Gambles}
we extend our model to multiplicative gambles, in which the realized
outcome of the gamble is interpreted as the per-dollar return.

\subsection{Risk Indices\label{subsec:Risk-Indexes}}

We define a \emph{risk index }as a function $Q:\mathcal{G}\rightarrow\mathbb{R^{++}}$
that assigns to each gamble a positive number, which is interpreted
as the gamble's riskiness. We study three risk indices: 
\begin{enumerate}
\item The \emph{variance-to-mean index} $Q_{VM}\left(g\right)$ is the ratio
of the variance to the mean:
\[
Q_{VM}\left(g\right)=\frac{\sigma^{2}\left[g\right]}{\boldsymbol{E}\left[g\right]},\,\,\,\,\,\,\,\,\,\,\,\,\textrm{where}\,\,\sigma^{2}\left[g\right]\equiv E\left[\left(g-E\left[g\right]\right)^{2}\right].
\]
\item The\emph{ inverse Sharpe index} $Q_{IS}\left(g\right)$ is the ratio
of the standard deviation to the mean:
\[
Q_{IS}\left(g\right)=\frac{\sigma\left[g\right]}{\boldsymbol{E}\left[g\right]}.
\]
\item The \emph{standard deviation} \emph{index} $Q_{SD}\left(g\right)$
is equal to: $Q_{SD}\left(g\right)=\sigma\left[g\right].$
\end{enumerate}
We say that a risk index is consistent with a decision function over
a domain of agents and gambles, if: each agent in the domain finds
gamble $g$ less attractive than $g'$ with respect to the relevant
decision function iff the risk index of $g$ is higher than in $g'$.
Formally:
\begin{defn}
\label{def:Risk-index-}Risk index $Q$ is \emph{consistent} with
$f$ over the domain $DM\times G\subseteq\mathcal{DM}\times\mathcal{G}$
if 
\[
Q\left(g\right)>Q\left(g'\right)\Leftrightarrow f\left(\left(u,w\right),g\right)<f\left(\left(u,w\right),g'\right)
\]
 for each agent $\left(u,w\right)\in DM$ and each pair of gambles
$g,g'\in G$.

Our definition of consistency is restrictive and for a given domain
of gambles and agents it may not apply at all. In particular observe
that a domain $DM\times G\subseteq\mathcal{DM}\times\mathcal{G}$
admits a consistent risk index iff \emph{all agents have the same
ranking over gambles}, i.e., if 
\[
f\left(\left(u,w\right),g\right)<f\left(\left(u,w\right),g'\right)\Leftrightarrow f\left(\left(u',w'\right),g\right)<f\left(\left(u',w'\right),g'\right)
\]
 for each pair of agents $\left(u,w\right),\left(u',w'\right)\in DM$
and each pair of gambles $g,g'\in G$.

Note that consistency is an ordinal concept; i.e., a consistent risk
index is unique up to monotone transformations; if risk index $Q$
is \emph{consistent} with function $f$ over the domain $DM\times G$,
then risk index $Q'$ is consistent with $f$ over this domain iff
there exists a strictly increasing mapping $\theta:$ $Q\left(G\right)\rightarrow Q'\left(G\right)$,
s.t. $Q'\left(g\right)=\theta\left(Q\left(g\right)\right)$ for each
gamble $g\in G$.
\end{defn}

\subsection{Risk-Aversion Indices}

We define a \emph{risk-aversion index }as a function $\phi:\mathcal{DM}\rightarrow\mathbb{R^{++}}$
that assigns to each agent a non-negative number, which is interpreted
as the agent's risk aversion. We mainly study one risk index in the
paper, the Arrow\textendash Pratt coefficient of absolute risk aversion,
denoted by $\rho:\mathcal{DM}\rightarrow\mathbb{R^{++}}$, which is
defined as follows: 
\[
\rho\left(u,w\right)=\frac{-u''\left(w\right)}{u'\left(w\right)}.
\]

We say that a risk-aversion index is consistent with a decision function
over a domain of agents and gambles, if, for each gamble and each
pair of agents in the domain, the agent with the higher index chooses
a lower value for his investment decision in the gamble. 
\begin{defn}
\label{def:Risk-aversion-index-1}Risk-aversion index $\phi$ is \emph{consistent}
with $f$ over $DM\times G\subseteq\mathcal{DM}\times\mathcal{G}$
if 
\[
\phi\left(u,w\right)>\phi\left(u',w'\right)\Leftrightarrow f\left(\left(u,w\right),g\right)<f\left(\left(u',w'\right),g\right)
\]
 for each pair of agents $\left(u,w\right),\left(u',w'\right)\in DM$
and each gamble $g\in G$.

Here again, the definition of consistency is restrictive and for a
given domain of gambles and agents it may not apply at all. Specifically,
a domain $DM\times G\subseteq\mathcal{DM}\times\mathcal{G}$ admits
a consistent risk-aversion index iff all \emph{gambles induce the
same ranking over agents}, i.e., if 
\[
f\left(\left(u,w\right),g\right)<f\left(\left(u',w'\right),g\right)\Leftrightarrow f\left(\left(u,w\right),g'\right)<f\left(\left(u',w'\right),g'\right)
\]
for each pair of agents $\left(u,w\right),\left(u',w'\right)\in DM$
and each pair of gambles $g,g'\in G$. Further, the consistency of
a risk-aversion index is unique up to a strictly monotone transformation.
\end{defn}

\section{Normal Distributions and CARA Utilities\label{sec:Normal-Distributions-=000026}}

\subsection{Result}

We begin by presenting a claim, which summarizes known results for
normal distributions and CARA utilities. Specifically, we show that
in each of the decision functions described above, all agents with
CARA utilities have the same ranking over all normally distributed
gambles, and that each of these rankings is consistent with one of
the risk indices presented above. Moreover, all normally distributed
gambles induce the same ranking over all agents with CARA utilities,
which is consistent with the Arrow\textendash Pratt coefficient.

Formally, let $DM_{CARA}\subseteq\mathcal{DM}$ be the set of decision
makers with CARA utilities:\footnote{Clearly, one can extend the definition of $DM_{CARA}$ (without affecting
any of the results) by allowing the utilities to differ from $1-e^{-{\color{purple}\rho}\cdot x}$
by adding a constant and multiplying by a positive scalar.} 
\[
DM_{CARA}=\left\{ \left(u,w\right)\in DM\,|\exists{\color{purple}\rho}>0,\,\,s.t.\,\,u\left(x\right)=1-e^{-{\color{purple}\rho}\cdot x}\right\} ,
\]
and let $\mathcal{G}_{N}\subseteq\mathcal{G}$ be the set of normally
distributed gambles with positive expectations:
\[
G_{N}=\left\{ g\in\mathcal{G}|g\sim Norm\left(\mu,\sigma\right),\,\,\textrm{for some }\mu,\sigma>0\right\} .
\]

\begin{claim}
\label{claim:normal-exponent}Let $u$ be a CARA utility with parameter
$\rho$ (i.e., $u\left(x\right)={\color{purple}1-e^{-{\color{purple}\rho}\cdot x}}$).
Then:
\begin{enumerate}
\item $f_{RP}((u,w),g)=-0.5\cdot\rho\cdot\sigma^{2}$, which implies that
the standard deviation index $Q_{SD}$ is consistent with the risk
premium function $f_{RP}$ in the domain $DM_{CARA}\times G_{N}$.
\item (I)$f_{AR}((u,w),g)=1$ iff $\frac{2}{\rho}\cdot\frac{\mu}{\sigma^{2}}\geq1$
(and $f_{AR}((u,w),g)=0$ otherwise), (II) $f_{CA}$$((u,w),g)=\frac{1}{\rho}\cdot\frac{\mu}{\sigma^{2}}$,
which imply that the variance-to-mean index $Q_{VM}$ is consistent
with both the acceptance/rejection function $f_{AR}$ and the capital
allocation function $f_{CA}$.
\item $f_{CE}((u,w),g)=\frac{1}{2\cdot\rho}\cdot\left(\frac{\mu}{\sigma}\right)^{2}$,
which implies that the inverse Sharpe index $Q_{IS}$ is consistent
with the optimal certainty equivalent function $f_{CE}$.
\item The Arrow\textendash Pratt coefficient of absolute risk aversion $\rho$
is consistent with all four decision functions $f_{CA}$, $f_{AR}$,
$f_{CE}$, and $f_{RP}$ in the domain $DM_{CARA}\times G_{N}$.
\end{enumerate}
\end{claim}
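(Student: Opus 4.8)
The unifying observation I would build the whole proof on is that every one of the four decision functions, evaluated at a CARA agent and a normal gamble, is governed by the moment-generating function of the normal law. Since $\alpha g\sim Norm(\alpha\mu,\alpha^{2}\sigma^{2})$ for any $\alpha\geq0$, we have $\mathbb{E}\!\left[e^{-\rho\alpha g}\right]=\exp\!\big(-\rho\alpha\mu+\tfrac12\rho^{2}\alpha^{2}\sigma^{2}\big)$, and hence
\[
\mathbb{E}\!\left[u(w+\alpha g)\right]=1-\exp\!\Big(-\rho w-\rho\alpha\mu+\tfrac12\rho^{2}\alpha^{2}\sigma^{2}\Big).
\]
This single closed form is the workhorse: each decision function is obtained from it by an elementary algebraic manipulation, and no property of the normal law beyond its MGF is used.

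First I would read off the four formulas. For $f_{RP}$ I set $\alpha=1$, equate the display to $u(w+\mu+f_{RP})=1-e^{-\rho(w+\mu+f_{RP})}$, cancel the common factor $e^{-\rho(w+\mu)}$, and take logarithms; the linear terms cancel and leave $f_{RP}=-\tfrac12\rho\sigma^{2}$. For $f_{AR}$ I again set $\alpha=1$ and compare with $u(w)$: acceptance holds iff the exponent $-\rho\mu+\tfrac12\rho^{2}\sigma^{2}$ is nonpositive, i.e. iff $\tfrac{2}{\rho}\cdot\tfrac{\mu}{\sigma^{2}}\geq1$. For $f_{CA}$ I note that maximizing the display over $\alpha$ is equivalent to minimizing the convex quadratic $q(\alpha)=-\rho\alpha\mu+\tfrac12\rho^{2}\alpha^{2}\sigma^{2}$; its first-order condition gives the interior minimizer $\alpha^{\ast}=\mu/(\rho\sigma^{2})$, which is strictly positive (so the constraint $\alpha\geq0$ is slack) and is a genuine maximum of utility since $q''=\rho^{2}\sigma^{2}>0$. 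Finally, substituting $\alpha^{\ast}$ back yields the optimal value $1-\exp(-\rho w-\tfrac12\mu^{2}/\sigma^{2})$; equating this to $u(w+f_{CE})$ and solving gives $f_{CE}=\tfrac{1}{2\rho}(\mu/\sigma)^{2}$.

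Given the closed forms, the consistency statements then reduce to monotonicity checks that I would verify directly. Fixing $\rho$ and varying the gamble, $f_{RP}=-\tfrac12\rho\sigma^{2}$ is strictly decreasing in $Q_{SD}=\sigma$; $f_{CA}=\mu/(\rho\sigma^{2})=1/(\rho\,Q_{VM})$ is strictly decreasing in $Q_{VM}=\sigma^{2}/\mu$; and $f_{CE}=\tfrac{1}{2\rho}\,Q_{IS}^{-2}$ is strictly decreasing in $Q_{IS}=\sigma/\mu$. Crucially the direction of each monotonicity does not depend on $\rho>0$, so all CARA agents induce the same ranking and the stated index is consistent in the sense of Definition \ref{def:Risk-index-}. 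The acceptance function needs the reading appropriate to a binary function: part (2.I) shows agent $\rho$ accepts $g$ exactly when $Q_{VM}(g)\leq2/\rho$, so each agent's acceptance set is a sublevel set of $Q_{VM}$ with a $\rho$-dependent threshold, which is precisely the way a risk index orders a $\{0,1\}$-valued decision.

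For the Arrow--Pratt statement I would first record that for $u(x)=1-e^{-\rho x}$ one has $-u''/u'\equiv\rho$ at every wealth level, so the agent's coefficient of absolute risk aversion equals the parameter $\rho$. Then, fixing the gamble and varying the agent, all four closed forms are monotone in $\rho$: $f_{RP}=-\tfrac12\rho\sigma^{2}$ is decreasing, $f_{CA}$ and $f_{CE}$ are proportional to $1/\rho$ and hence decreasing, and by (2.I) acceptance holds iff $\rho\leq2\mu/\sigma^{2}$, again decreasing in $\rho$. Thus a more risk-averse agent always chooses a weakly smaller value, giving consistency of $\rho$ with all four functions. The computations are routine; the only points I expect to demand care are the two flagged for $f_{CA}$ (that the maximizer is interior and positive, and that convexity makes it the global optimum) and the correct interpretation of consistency for the binary $f_{AR}$, where the literal strict-iff of Definition \ref{def:Risk-index-} must be replaced by the sublevel-set formulation above.
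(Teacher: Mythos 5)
Your proposal is correct and follows essentially the same route as the paper's proof: both rest on the normal moment-generating function (the paper's Fact~\ref{lem:norm-dist}), derive the same four closed forms, and conclude consistency by the same monotonicity checks in the gamble parameters and in $\rho$. Your explicit handling of the interior maximizer for $f_{CA}$ and of the sublevel-set reading of consistency for the binary $f_{AR}$ is slightly more careful than the paper's, but it is a refinement of, not a departure from, the same argument.
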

For completeness, we present the proof of Claim \ref{claim:normal-exponent}
in Appendix \ref{subsec:Proof-of-Claim}.

\subsection{Discussion}

Each agent with CARA utility is described by two parameters (initial
wealth $w$, and Arrow\textendash Pratt coefficient $\rho$). Similarly,
each normal gamble is described by two parameters (expectation $\mu$
and standard deviation $\sigma$). This implies that any decision
function can be expressed as a function $g\left(w,\rho,\mu,\sigma\right)$
of these four parameters.

\paragraph{Other Consistent Indices of Risk Aversion}

CARA utilities have the well-known property that the initial wealth
does not affect expected utility calculations with respect to investments
in gambles. Thus, whenever the investment decision is made by choosing
the option that maximizes the agent's expected utility (such as in
all four of the decision functions analyzed above), then the decision
function is independent of $w$, which implies that the parameter
$\rho$ is a consistent risk-aversion index. By contrast, for investment
decisions that are not determined by maximizing the agent's expected
utility, there might be different risk-aversion consistent indices.
For instance, \citet{Foster_Hart_2009} analyze a situation in which
an agent accepts or rejects gambles while his goal is to avoid bankruptcy.
The index of risk aversion that is consistent with their decision
function is the wealth level.

\paragraph{Separability Condition for Having a Consistent Risk Index}

The decision functions analyzed above have the additional separability
property that each function $f$ can be represented as a product of
two functions: one that depends only on the parameters describing
the agent ($w$ and $\rho$), and one that depends only on the parameters
of the gamble, i.e., $f\left(\left(u,w\right),g\right)=\tilde{f}\left(w,\rho,\sigma,\mu\right)=h\left(w,\rho\right)\cdot\nu\left(\mu,\sigma\right)$.
This separability implies that all agents with CARA utilities have
the same ranking over normal gambles (as this ranking depends only
on $\nu\left(\mu,\sigma\right)$, which does not depend on the agent's
parameters), which, in turn, implies that there exists a consistent
risk index. Similarly, the separability implies that all normal gambles
induce the same ranking over agents (as this ranking depends only
on $h\left(w,\rho\right)$, which does not depend on the parameters
of the normal gamble).

Other decision functions might not satisfy this separability property.
One example of such a non-separable decision function is \textit{\emph{the
standard certainty equivalent}} of a continuous gamble $f_{SCE}$
(as opposed to the certainty equivalent of the optimal allocation
of the gamble $f_{CE}\left(\left(u,w\right),g\right)$ discussed above),
which is implicitly defined by 
\[
\mathbb{E}\left[u(w+g)\right]=u(w+f_{SCE}).
\]

The definitions of $f_{SCE}$ and $f_{RP}$ imply that $f_{SCE}=\mathbb{E}\left[g\right]+f_{RP}.$
Substituting $f_{RP}=-0.5\cdot\rho\cdot\sigma^{2}$ (which is proven
in Appx. \ref{subsec:Proof-of-Claim}) yields: $f_{SCE}=\mathbb{E}\left[g\right]+f_{RP}=\mu-0.5\cdot\rho\cdot\sigma^{2},$
which is a non-separable function of $\rho$,$\mu,\sigma$. The non-separability
implies that agents with different CARA utilities have different rankings
for normal gambles and, therefore, no risk index can be consistent
with these decisions.

\section{Short-Term Investments in Continuous Gambles\label{sec:Continuous-Investment-Decisions}}

In what follows we adapt our model to short-term investment decisions
regarding assets whose value follows a continuous random process.
Our description of the continuous-time setup follows \citet{shreve2004stochastic}.

\subsection{Continuous-Time Random Processes}

Let $\left(\Omega,\mathcal{F},{\normalcolor \mathbb{P}}\right)$ be
a probability space on which a Brownian motion $W_{t}$ is defined,\footnote{For ease of exposition we limit the Wiener process to one dimension.
All the results remain the same with a multidimensional Wiener process
with the corresponding adjustments.} with an associated filtration $\mathcal{F}(t)$. Let the process
$g$ be described by the following stochastic differential equation:

\begin{equation}
dg_{t}=\mu_{t}dt+\sigma_{t}dW_{t},\label{eq:Wiener}
\end{equation}
where the drift $\mu$ and the diffusion $\sigma$ are adapted stochastic
processes (i.e., $\mu_{t}$ and $\sigma_{t}$ are $\mathcal{F}(t)$-measurable
for each $t$, see \citealp[Footnote 1 on page 97]{shreve2004stochastic})
and are both continuous in $t$. We assume that $\mu_{0}>0$ and $\sigma_{0}>0$
and that $\sigma_{t}\geq0$ for all $t>0$. We also assume that $E\int_{0}^{t}\sigma_{s}^{2}ds$
and $E\int_{0}^{t}|\mu_{s}|ds$ are finite for every $t>0$. This
implies that the integrals $E\int_{0}^{t}\sigma_{s}dW_{s}$ and $E\int_{0}^{t}\mu ds$
are well defined, and the Ito integral $\int_{0}^{t}\sigma_{s}dW_{s}$
is a martingale; see \citet[Footnote 2 on page 143]{shreve2004stochastic}.

The process $g$ is interpreted as the additive return of some risky
asset. Specifically, let $P_{t}$ be the price of some risky asset
at time $t$ and assume that $P_{0}$ is known. Then, 
\[
g_{t}=P_{t}-P_{0},
\]
 is the additive return of the asset at time $t$. In particular,
observe that $g_{0}=P_{t=0}-P_{0}\equiv0$. For simplicity we assume
that $g$ is bounded from below (i.e., there exists $M_{g}\in\mathbb{R}$,
such that $g_{t}\geq M_{g}$ for each $t>0$). Under these assumptions,
for a sufficiently small time $t$, $g_{t}$ is a gamble (see Footnote
\ref{fn:The-analysis-implies} in Appendix \ref{subsec:Proof-of-Theorem});
i.e., it has positive expectation and takes negative values with positive
probability. Thus, we can apply our definitions of decision functions
and indices to $g_{t}$ for each specific value of $t>0$.

In our setup, a decision maker has to make a decision at time zero,
where he cares only about his wealth at some time t. From the perspective
of this decision maker, $P_{0}$ is a pure number and $g_{t}$ is
a gamble, just as before.

Let $\Gamma$ denote the set of all continuous-time random processes
satisfying these assumptions. Observe that the set $\Gamma$ is quite
general. In particular, it includes returns on assets whose prices
are described by geometric Brownian processes\emph{ }(\citealp{black1973pricing}\emph{;
}\citealp[Chapters 4 and 5]{merton1992continuous})\emph{,}\footnote{When one models an asset's price \emph{P} by a geometric Brownian
motion, then $P_{t}$ (the asset value at time $t$) obtains only
positive values. In this case, the additive return is defined as the
difference between the asset's value at time $t$ and its initial
value, i.e., $g_{t}=P_{t}-P_{0}$. Obviously, the additive return
can obtain both positive and negative values (for any time $t>0$),
which is consistent with our requirement that $g_{t}$ be a gamble.
Specifically, in the case of a geometric Brownian motion, $dg_{t}\equiv dp_{t}=p_{t}\cdot\mu\cdot dt+p_{t}\cdot\sigma\cdot dW$,
which implies that $\mu_{t}=p_{t}\cdot\mu$ and $\sigma_{t}=p_{t}\cdot\sigma$
as in Equation (\ref{eq:Wiener}). }\emph{ }and variants of\emph{ }arithmetic Brownian processes and of
mean-reverting processes that are bounded from below (also known as
Ornstein\textendash Uhlenbeck processes; see, e.g., \citealp[Chapter 5]{merton1992continuous};
\citealp{hull1987pricing}; \citealp{meddahi2004temporal}), such
as \citeauthor{cox1985theory2}'s \citeyearpar{cox1985theory2} process
for modeling interest rate.

\subsection{Adapted Definitions}

We define a \emph{local-risk index} at time zero as a function $Q^{l}:\Gamma\rightarrow\mathbb{R}^{++}$
that assigns to each process $g\in\Gamma$ a positive number, which
is interpreted as the process's short-term riskiness at $t=0$. Given
$g\in\Gamma$ with initial parameters $\mu_{0}$ and $\sigma_{0},$
we define three specific local-risk indices (analogous to the corresponding
definitions in Section \ref{subsec:Risk-Indexes}):
\begin{enumerate}
\item The \emph{variance-to-mean local index} $Q_{VM}^{l}(g)$ is equal
to: 
\[
Q_{VM}^{l}\left(g\right)=\frac{\sigma_{0}^{2}}{\mu_{0}}.
\]
\item The\emph{ inverse Sharpe local index} $Q_{IS}^{l}(g)$ is equal to:
\[
Q_{IS}^{l}\left(g\right)=\frac{\sigma_{0}}{\mu_{0}}.
\]
\item The \emph{standard deviation local index} $Q_{SD}^{l}(g)$ is equal
to: $Q_{SD}^{l}\left(g\right)=\sigma_{0}.$
\end{enumerate}
Given a continuous-time process $g\in\Gamma$, decision function $f$
and agent $\left(u,w\right)\in\mathcal{DM}$, let $f_{g}^{\left(u,w\right)}\left(t\right)\equiv f\left(\left(u,w\right),g_{t}\right)$
be the value of the decision function of agent $\left(u,w\right)$
with respect to an investment in $g$ as a function of the duration
of the investment $t$.

The following definition, which deals with general real-valued functions,
will be useful for defining the concept of consistency of indices
in the continuous-time framework.
\begin{defn}
Let $f,h:\mathbb{R^{+}}\Rightarrow\mathbb{R},$ and assume that $\lim_{t\rightarrow0}\frac{f\left(t\right)}{h\left(t\right)}$
is well defined. We say that $f$ is\emph{ uniformly higher} than
$h$ (around zero) and denote it by $f>>h$ if (1) there exists $\bar{t}>0$,
such that $f\left(t\right)>h\left(t\right)$ for each $t\in\left(0,\bar{t}\right)$
and (2) 
\[
\lim_{t\rightarrow0}\frac{f\left(t\right)}{h\left(t\right)}\neq1.
\]

That is, $f>>h$ if\emph{ $f\left(t\right)$} is strictly higher than
$h\left(t\right)$ for any sufficiently small $t$, and the relative
difference between the two functions does not become negligible (as
measured by the ratio $\frac{f\left(t\right)}{h\left(t\right)}$ not
converging to one) in the limit of $t\rightarrow0$.
\end{defn}
We say that a local-risk index is consistent with a decision function
over continuous returns, if the local-risk index of $g'$ is lower
than the index of $g$ iff all risk-averse agents find $g_{t}$ uniformly
more attractive than $g_{t}'$ with respect to any short-term investment.
\begin{defn}
\label{def:Local-risk-index-contion}Local-risk index $Q^{l}:\Gamma\rightarrow\mathbb{R}^{+}$
is\emph{ consistent} with decision function $f$ over the set of continuous
returns $\Gamma$ if, for each pair of continuous-time processes $g,g'\in\Gamma$
and each agent $\left(u,w\right)\in\mathcal{DM}$, we have that
\begin{align*}
Q^{l}\left(g\right) & >Q^{l}\left(g'\right)\Leftrightarrow f_{g}^{\left(u,w\right)}>>f_{g'}^{\left(u,w\right)}.
\end{align*}
\end{defn}
Note that a consistent risk index is unique up to strictly monotone
transformations.

We say that a risk-aversion index is consistent with a decision function
over continuous-time returns, if the risk-aversion index of agent
$\left(u,w\right)$ is strictly higher than the index of $\left(u',w'\right)$
iff agent $\left(u,w\right)$ finds all gambles uniformly less attractive
than agent $\left(u',w'\right)$.
\begin{defn}
\label{def:Risk-aversion-index-cont}Risk-aversion index $\phi:\mathcal{DM}\longrightarrow\mathbb{R}^{+}$
is \emph{consistent} with decision function $f$ over the set of continuous
returns $\Gamma$ if, for each pair of agents $\left(u,w\right),\left(u',w'\right)\in\mathcal{DM}$
and for each gamble $g\in\Gamma$, we have that
\begin{align*}
\phi\left(u,w\right) & >\phi\left(u',w'\right)\Leftrightarrow f_{g}^{\left(u,w\right)}<<f_{g}^{\left(u',w'\right)}
\end{align*}
 Note that a consistent risk-aversion index is unique up to monotone
transformations.
\end{defn}

\subsection{Main Result}

Our main result shows that in each of the decision functions described
above, all agents have the same ranking over all short-term continuous
returns. Moreover, the rankings are consistent with the three risk
indices presented above, and they are the instantaneous versions of
the corresponding indices in the case of normally distributed gambles
and CARA utilities analyzed in Claim \ref{claim:normal-exponent}.
Finally, we adapt to the present setup the classic result that all
continuous short-term returns induce the same ranking over all agents,
which is consistent with the Arrow\textendash Pratt coefficient of
absolute risk aversion (as in the case of normally distributed gambles
and CARA utilities). Formally:
\begin{thm}
\label{thm:main-result}The following conditions hold over the set
of continuous returns $\Gamma$:
\begin{enumerate}
\item The standard deviation index $Q_{SD}^{l}$ is consistent with the
risk premium function $f_{RP}.$
\item The variance-to-mean index $Q_{VM}^{l}$ is consistent with the capital
allocation function $f_{CA}$.
\item The inverse Sharpe index $Q_{IS}^{l}$ is consistent with the optimal
certainty equivalent function $f_{CE}$.
\item The Arrow\textendash Pratt coefficient of absolute risk aversion $\rho$
is consistent with decision functions $f_{CA}$, $f_{CE}$, and $f_{RP}$.
\end{enumerate}
\end{thm}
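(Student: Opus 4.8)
The plan is to reduce each of the four statements to the leading-order behaviour of $f_{g}^{\left(u,w\right)}\left(t\right)$ as $t\to0$, and to exploit the fact that to this order the short-term gamble $g_{t}$ behaves exactly like a normal gamble with mean $\mu_{0}t$ and variance $\sigma_{0}^{2}t$, while the general utility $u$ enters only through its local Arrow--Pratt coefficient $\rho\left(u,w\right)=-u''\left(w\right)/u'\left(w\right)$ (a Pratt-type reduction: to second order at $w$, any $u$ looks like a CARA utility with parameter $\rho\left(u,w\right)$). Concretely, I expect to recover the three formulas of Claim \ref{claim:normal-exponent} under the substitution $\mu\mapsto\mu_{0}t$ and $\sigma^{2}\mapsto\sigma_{0}^{2}t$: namely $f_{RP,g}\left(t\right)/t\to-\tfrac{1}{2}\rho\left(u,w\right)\sigma_{0}^{2}$, $f_{CA,g}\left(t\right)\to\tfrac{1}{\rho\left(u,w\right)}\tfrac{\mu_{0}}{\sigma_{0}^{2}}$, and $f_{CE,g}\left(t\right)/t\to\tfrac{1}{2\rho\left(u,w\right)}\left(\tfrac{\mu_{0}}{\sigma_{0}}\right)^{2}$. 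Once these asymptotics are in hand, the definitions of $Q_{SD}^{l},Q_{VM}^{l},Q_{IS}^{l}$ and of $\rho$ match each leading coefficient, and the consistency claims of Definitions \ref{def:Local-risk-index-contion} and \ref{def:Risk-aversion-index-cont} follow.

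First I would establish the moment asymptotics of $g_{t}$ near zero. Writing $g_{t}=\int_{0}^{t}\mu_{s}\,ds+\int_{0}^{t}\sigma_{s}\,dW_{s}$, continuity and integrability of $\mu$ near the origin give $E\left[g_{t}\right]=\mu_{0}t+o\left(t\right)$; the Ito isometry together with continuity of $s\mapsto E\left[\sigma_{s}^{2}\right]$ gives $E\big[(\int_{0}^{t}\sigma_{s}\,dW_{s})^{2}\big]=\sigma_{0}^{2}t+o\left(t\right)$, while a Cauchy--Schwarz bound shows the drift contributes only $O\left(t^{2}\right)$ to the second moment, so $\sigma^{2}\left[g_{t}\right]=\sigma_{0}^{2}t+o\left(t\right)$. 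Finally a Burkholder--Davis--Gundy estimate yields $E\left[\left|g_{t}\right|^{3}\right]=O\left(t^{3/2}\right)=o\left(t\right)$, which will be the key quantity for controlling tails and remainders below.

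Next come the three expansions. For $f_{RP}$ I would Taylor-expand the defining identity (\ref{eq:f_RP}) about $w+E\left[g_{t}\right]$: the first-order term cancels and the quadratic term gives $u'\left(w\right)f_{RP}=\tfrac{1}{2}u''\left(w\right)\sigma^{2}\left[g_{t}\right]+o\left(t\right)$, i.e. the claimed limit. For $f_{CA}$ I would use the first-order condition $E\left[g_{t}\,u'\left(w+\alpha g_{t}\right)\right]=0$ attached to (\ref{eq:f_CA}); a first-order expansion of $u'$ and solving give $\alpha^{*}\left(t\right)=\tfrac{1}{\rho\left(u,w\right)}\tfrac{E\left[g_{t}\right]}{E\left[g_{t}^{2}\right]}\left(1+o\left(1\right)\right)\to\tfrac{1}{\rho\left(u,w\right)}\tfrac{\mu_{0}}{\sigma_{0}^{2}}$. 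For $f_{CE}$ I would substitute $\alpha^{*}$ into $\max_{\alpha}E\left[u\left(w+\alpha g_{t}\right)\right]$ and into (\ref{eq:f_CE}); the quadratic terms collapse exactly as in Claim \ref{claim:normal-exponent} to the claimed limit. In each case, comparing two processes $g,g'$ (or two agents) the ratio of leading coefficients equals a strictly monotone image of the ratio of the relevant indices (or of $\rho$), which differs from $1$ precisely when the indices differ; this simultaneously delivers the ``eventually strictly larger/smaller'' clause and the ``ratio $\neq1$'' clause of $>>$, and forces the ratio to tend to $1$ when the indices coincide, yielding the required biconditionals.

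The main obstacle is the capital-allocation case. Unlike $f_{RP}$ and $f_{CE}$, where the effective stake is negligible ($0$ and $\alpha^{*}g_{t}=O(\sqrt{t})$ respectively), here the optimal stake $\alpha^{*}\left(t\right)$ does \emph{not} shrink but converges to the $O\left(1\right)$ constant $\tfrac{1}{\rho}\tfrac{\mu_{0}}{\sigma_{0}^{2}}$, so the expansion of $u'\left(w+\alpha g_{t}\right)$ must be justified uniformly for $\alpha$ near this limit, one must rule out $f_{CA}=\infty$ for small $t$, and existence, uniqueness and convergence of the maximiser must be proved. I would obtain these from strict concavity of $\alpha\mapsto E\left[u\left(w+\alpha g_{t}\right)\right]$ together with the monotonicity of $u'$ and the lower bound $g_{t}\ge M_{g}$: for $g_{t}\ge0$ one has $0<u'\left(w+\alpha g_{t}\right)\le u'\left(w\right)$, and for $g_{t}<0$ one has $u'\left(w+\alpha g_{t}\right)\le u'\left(w+\alpha M_{g}\right)$, giving uniform domination. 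A further subtlety is that $u$ is only twice continuously differentiable, so remainders cannot be expanded to third order; instead I would control them by the modulus of continuity of $u''$ on a compact neighbourhood on the event $\left\{ \left|g_{t}\right|\le\delta\right\}$, and by the tail estimate $E\left[\left|g_{t}\right|\mathbf{1}_{\left\{ \left|g_{t}\right|>\delta\right\} }\right]\le\delta^{-2}E\left[\left|g_{t}\right|^{3}\right]=o\left(t\right)$ on the complement, which is exactly where the third-moment bound from the first step is used. The same domination argument shows every surviving leading coefficient depends on the agent only through $\rho\left(u,w\right)$, which is what makes $\rho$ a consistent risk-aversion index simultaneously for $f_{CA}$, $f_{CE}$, and $f_{RP}$.
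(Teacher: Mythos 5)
Your proposal is correct in outline but follows a genuinely different route from the paper's. You treat $g_t$ as a static small gamble, derive its moment asymptotics ($E[g_t]\sim\mu_0 t$, $\sigma^2[g_t]\sim\sigma_0^2 t$, $E[|g_t|^3]=o(t)$), and then run a Pratt--Samuelson-type Taylor expansion of $u$ around $w$, controlling remainders by the modulus of continuity of $u''$ plus a third-moment tail bound. The paper never expands $u$ around the realized gamble and never computes moments of $g_t$ at all: it applies Ito's lemma to $u(w+g_t)$ (resp.\ $u(w+\alpha g_t)$), notes that the stochastic-integral term is a martingale (using boundedness of $u'$ on $[w+M_g,\infty)$ and the assumed square-integrability of $\sigma$) so that $E[u(w+g_t)]-u(w)=E[\int_0^t(\mu_q u_q'+\tfrac12\sigma_q^2 u_q'')\,dq]$, and passes to the limit in this time integral via the mean-value theorem for integration; convergence of the implicit root (for $f_{RP}$, $f_{CE}$) and of the maximizer (for $f_{CA}$) is then delegated to two elementary calculus lemmas, which do the work you assign to ``existence, uniqueness and convergence of the maximiser.'' The final consistency verifications (eventual strict inequality plus ratio of limits $\neq 1$, and the converse when indices coincide) are essentially identical in both arguments. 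What the paper's route buys is precisely the step where yours is thinnest: the Burkholder--Davis--Gundy bound $E[|g_t|^3]=O(t^{3/2})$ requires $E[(\int_0^t\sigma_s^2\,ds)^{3/2}]<\infty$ (or a localization argument), which does not follow from the paper's stated hypotheses ($E\int_0^t\sigma_s^2\,ds<\infty$ and pathwise continuity of $\sigma$), whereas the Ito route needs only the square-integrability already assumed. If you add that integrability or localize, your argument goes through; note also that on the tail event $\{|g_t|>\delta\}$ the remainder in the $f_{CA}$ first-order condition grows like $|g_t|^2$ rather than $|g_t|$, so the Chebyshev-type estimate you actually need there is $E[|g_t|^2\mathbf{1}_{\{|g_t|>\delta\}}]\le\delta^{-1}E[|g_t|^3]=o(t)$.
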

\begin{proof}[Sketch of proof; formal proof is presented in Appendix \ref{subsec:Proof-of-Theorem}.]
 The value of an asset with a continuous-time return $g$ after a
sufficiently small time $t$ is represented by a gamble $g_{t}$ for
which the magnitudes of all high moments are small relative to the
magnitude of the second moment. Assuming random variables of this
type allows us to use Taylor expansion to approximate the decision
functions, and to obtain the consistent risk indices.

Recall that $\lim_{t\rightarrow0}\frac{\sigma^{2}\left[g_{t}\right]}{t}=\sigma_{0}^{2}$
and $\lim_{t\rightarrow0}\frac{\mu\left[g_{t}\right]}{t}=\mu_{0}$,
which implies for sufficiently small $t$-s that $\sigma^{2}\left[g_{t}\right]\approx t\cdot\sigma_{0}^{2}$
and $\mu\left[g_{t}\right]\approx t\cdot\mu_{0}$. We begin with a
standard approximation of the risk premium function $f_{RP}$ (see,
e.g., \citealp[Chapter 1]{eeckhoudt2005economic}). Recall that the
risk premium was defined implicitly as
\begin{equation}
\mathbb{E}\left[u\left(w+g_{t}\right)\right]=u\left(w+\mathbb{E}\left[g_{t}\right]+f_{RP}\right).\label{eq:sketch-of-proof}
\end{equation}
A second-order Taylor expansion of the left-hand side around $w+\mathbb{E}[g_{t}]$
yields
\begin{align*}
\mathbb{E}\left[u\left(w+g_{t}\right)\right] & \approx\mathbb{E}\bigg[u\left(w+\mathbb{E}\left[g_{t}\right]\right)+u'(w+\mathbb{E}[g_{t}])(g_{t}-\mathbb{E}[g_{t}])+\frac{1}{2}u''\left(w+\mathbb{E}\left[g_{t}\right]\right)(g_{t}-\mathbb{E}[g_{t}])^{2}\bigg]\\
= & u\left(w+E\left[g\right]\right)+\frac{1}{2}u''\left(w+E\left[g_{t}\right]\right)\sigma^{2}\left[g_{t}\right].
\end{align*}
A first-order Taylor expansion of the right-hand side around $w+E[g]$
yields

\[
u\left(w+\mathbb{E}\left[g_{t}\right]+f_{RP}\right)\approx u\left(w+\mathbb{E}\left[g_{t}\right]\right)+u'\left(w+\mathbb{E}\left[g_{t}\right]\right)\cdot f_{RP}.
\]
Combining these equations and isolating $f_{RP}$ yields

\[
f_{RP}\approx\frac{1}{2}\frac{u''\left(w\right)}{u'\left(w\right)}\sigma^{2}\left[g_{t}\right]\approx\frac{1}{2}\frac{u''\left(w\right)}{u'\left(w\right)}t\cdot\sigma_{0}^{2},
\]
which implies that $Q_{SD}^{l}=\sigma_{0}$ (resp., $\rho$=$-\frac{1}{2}\frac{u''\left(w\right)}{u'\left(w\right)}$)
is a consistent risk index (resp., risk-aversion index) for decision
function $f_{RP}$.

In order to analyze $f_{CA}$, we define $C\left(\alpha\right)=\left(f_{RP}\left(\alpha\cdot g_{t}\right)+\mathbb{E}[\alpha\cdot g_{t}]\right)$
to be the certainty equivalent of investment $\alpha$ in $g_{t}$.
Substituting the value of $f_{RP}$ calculated above we get 

\[
C\left(\alpha\right)\approx\alpha^{2}\frac{1}{2}\frac{u''\left(w\right)}{u'\left(w\right)}\sigma^{2}\left[g_{t}\right]+\alpha\cdot\mathbb{E}[g_{t}].
\]
In order to maximize $C\left(\alpha\right)$, we compare the derivative
to zero, to get

\[
\frac{\partial C\left(\alpha\right)}{\partial\alpha}=0\,\,\Leftrightarrow\,\,\alpha^{*}\cdot\frac{u''\left(w\right)}{u'\left(w\right)}\sigma^{2}\left[g_{t}\right]+\mathbb{E}[g_{t}]=0\,\,\Leftrightarrow\,\,f_{CA}=\alpha^{*}\approx-\frac{\mathbb{E}[g_{t}]}{\frac{u''}{u'}\sigma^{2}\left[g_{t}\right]}\approx-\frac{\mu_{0}}{\frac{u''}{u'}\sigma_{0}^{2}},
\]
which implies that the variance-to-mean index $Q_{VM}^{l}$ (resp.,
the Arrow\textendash Pratt coefficient $\rho$) is a consistent risk
index (resp., risk-aversion index) for decision function $f_{CA}$.

Finally, if we calculate $f_{CE}=C\left(\alpha^{*}\right)=C\left(f_{CA}\right)$
we get
\begin{align*}
f_{CE} & \approx\bigg(\frac{\mathbb{E}[g]}{\frac{u''\left(w\right)}{u'\left(w\right)}\sigma^{2}\left[g_{t}\right]}\bigg)^{2}\frac{1}{2}\frac{u''\left(w\right)}{u'\left(w\right)}\sigma^{2}\left[g_{t}\right]+\frac{\mathbb{E}[g_{t}]}{-\frac{u''\left(w\right)}{u'\left(w\right)}\sigma^{2}\left[g_{t}\right]}\mathbb{E}[g_{t}]\\
 & =\frac{1}{2}\frac{1}{-\frac{u''\left(w\right)}{u'\left(w\right)}}\bigg(\frac{\mathbb{E}[g_{t}]}{\sigma\left[g_{t}\right]}\bigg)^{2}\approx\frac{1}{2}\frac{1}{-\frac{u''\left(w\right)}{u'\left(w\right)}}t\cdot\bigg(\frac{\mu_{0}}{\sigma_{0}}\bigg)^{2},
\end{align*}
 which implies that the inverse Sharp index $Q_{IS}^{l}$ (the Arrow\textendash Pratt
coefficient $\rho$) is a consistent risk (risk aversion) index for
decision function $f_{CE}$.
\end{proof}
\begin{rem}[on why the indices in the continuous-time setup coincide with the
indices in the CARA-normal setup]
The expressions that approximate the various functions in the continuous-time
setup consist of two elements: the coefficient of risk aversion with
respect to the initial wealth level, and a function of the first and
second moment of the ``small'' gamble. In the CARA-normal setup
of Section \ref{sec:Normal-Distributions-=000026}, the risk-aversion
coefficient is constant over all wealth levels, and, thus, it is relevant
also to large gambles. In addition, the only moments that matter to
an agent with CARA utility who invests in a normally distributed gamble
are the first two moments. To see that, recall that for CARA utility
$u$ (with coefficient of risk aversion $\rho)$ and normal gamble
$g$, 
\[
\mathbb{E}[u(w+g)]=\mathbb{E}[1-e^{-\rho\cdot\left(w+g\right)}]=1-e^{-\rho\mathbb{E}[w+g]+0.5\rho^{2}\sigma^{2}\left[g\right]}.
\]
Therefore, it seems plausible that the expressions that represent
the decision functions in the CARA-normal setup, depend only on the
first two moments, and, thus, they coincide with the approximated
decision functions that are relevant for short-term investments in
assets with continuous returns.
\end{rem}

\subsection{Weak Consistency for Acceptance/Rejection}

The case of the the acceptance/rejection function $f_{AR}$ has been
analyzed in \citet{schreiber_acceptance_2013}. As the function $f_{AR}$
has only two feasible values (0 or 1), it cannot admit of consistent
risk indices, as in many cases in which one gamble is riskier than
another, an agent may choose to reject both gambles (and his value
of $f_{AR}$ of both gambles would be zero). Nevertheless, one can
define the milder notion of weak consistency, and show that a corollary
to \citeauthor{schreiber_acceptance_2013}'s \citeyearpar{schreiber_acceptance_2013}
result is that the risk index $Q_{VM}^{l}$ is weakly consistent with
the acceptance/rejection function $f_{AR}$.

A local-risk index is weakly consistent with a decision function over
the set of continuous returns, if each agent chooses a weakly lower
value of his investment decision in gamble $g_{t}$ relative to $g'_{t}$
for a sufficiently small \emph{t} if the local risk of $g$ is strictly
higher than the local risk of $g'$. Formally:
\begin{defn}
Local-risk index $Q^{l}:\Gamma\rightarrow\mathbb{R}^{+}$ is \emph{weakly
consistent} with decision function $f$ over the set $\Gamma$ if
for each agent $\left(u,w\right)\in\mathcal{DM}$ and each pair of
continuous-time processes $g,g'\in\Gamma$, there exists time $\bar{t}$,
such that, for each time $t<\bar{t}$, we have that
\[
Q^{l}\left(g\right)>Q^{l}\left(g'\right)\Rightarrow f\left(\left(u,w\right),g_{t}\right)\leq f\left(\left(u,w\right),g'_{t}\right).
\]
\end{defn}
Note that weak consistency does not restrict the agents' choices when
both gambles have the same local-risk index. As a result, a weakly
consistent risk index is unique only up to weakly monotone transformations;
i.e., if $Q$ is a weakly consistent local-risk index with decision
function $f$ over the set of continuous returns $\Gamma$, then risk
index $Q'$ is consistent with function $f$ over this domain if there
exists a weakly increasing mapping $\theta:$ $Q\left(G\right)\rightarrow Q'\left(G\right)$,
such that $Q'\left(g\right)=\theta\left(Q\left(g\right)\right)$ for
each $g\in\Gamma$. In particular, a constant index is trivially a
weakly consistent local-risk index of any decision function.

We say that a risk-aversion index is weakly consistent with a decision
function over continuous-time returns, if for each short-term return,
an agent chooses a (weakly) higher value for his investment decision
in the asset relative to another agent\textquoteright s decision if
the former agent\textquoteright s risk aversion is smaller. Formally:
\begin{defn}
Risk-aversion index $\phi:\mathcal{DM}\longrightarrow\mathbb{R}^{+}$
is \emph{weakly consistent} with decision function $f$ over the domain
of short-term continuous gambles if for each continuous-time process
$g\in\Gamma$ and each pair of agents $\left(u,w\right),\left(u',w'\right)\in\mathcal{DM}$,
there exists a time $\bar{t}$, such that, for each time $t<\bar{t}$,
we have that
\[
\phi\left(u,w\right)>\phi\left(u',w'\right)\Rightarrow f\left(\left(u,w\right),g_{t}\right)\leq f\left(\left(u',w'\right),g_{t}\right).
\]
\end{defn}
\textit{\emph{The following corollary, which is implied by \citet[Theorems 2.2 \& 3.3]{schreiber_acceptance_2013},
shows that the standard deviation index $Q_{VM}^{l}$ and the Arrow\textendash Pratt
coefficient of absolute risk aversion $\rho$ are weakly consistent
with the acceptance/rejection function $f_{AR}$.}}
\begin{cor}[{Implied by \textit{\emph{\citet[Theorems 2.2 \& 3.3]{schreiber_acceptance_2013}}}}]
The following conditions hold over the domain of continuous short-term
decisions:
\begin{enumerate}
\item The variance-to-mean index $Q_{VM}^{l}$ is weakly consistent with
decision function $f_{AR}$.
\item The Arrow\textendash Pratt coefficient $\rho$ is weakly consistent
with decision function $f_{AR}$.
\end{enumerate}
\end{cor}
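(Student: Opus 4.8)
The plan is to reduce acceptance/rejection to the sign of the standard certainty equivalent, and then to read off the threshold from the risk-premium approximation already established in the proof of Theorem~\ref{thm:main-result}. Recall that $f_{SCE}$ is defined by $\mathbb{E}[u(w+g_{t})]=u(w+f_{SCE})$ and satisfies $f_{SCE}=\mathbb{E}[g_{t}]+f_{RP}$. Since $u$ is strictly increasing, the agent accepts iff $\mathbb{E}[u(w+g_{t})]\geq u(w)$, i.e.\ iff $f_{SCE}\geq0$. Substituting the approximations $f_{RP}\approx\tfrac{1}{2}\tfrac{u''(w)}{u'(w)}\,t\sigma_{0}^{2}$ and $\mathbb{E}[g_{t}]\approx t\mu_{0}$ gives, for small $t$,
\[
f_{SCE}\approx t\Big(\mu_{0}-\tfrac{1}{2}\rho(u,w)\sigma_{0}^{2}\Big)=t\,\mu_{0}\Big(1-\tfrac{\rho(u,w)}{2}\,Q_{VM}^{l}(g)\Big).
\]
Because $t,\mu_{0}>0$, the leading-order sign of $f_{SCE}$ is that of $1-\tfrac{\rho}{2}Q_{VM}^{l}(g)$; hence for sufficiently small $t$ the agent accepts $g_{t}$ when $Q_{VM}^{l}(g)<2/\rho(u,w)$ and rejects it when $Q_{VM}^{l}(g)>2/\rho(u,w)$. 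This is exactly the instantaneous analogue of part~2(I) of Claim~\ref{claim:normal-exponent}, and it is the content of \citet[Theorem~2.2]{schreiber_acceptance_2013}, which I would invoke to make the leading-order sign determination rigorous.

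Given this threshold characterization, both weak-consistency statements follow by a short order-theoretic case analysis. For part~1, suppose $Q_{VM}^{l}(g)>Q_{VM}^{l}(g')$ and fix an agent. If $g_{t}$ is rejected then $f_{AR}((u,w),g_{t})=0\leq f_{AR}((u,w),g'_{t})$ trivially. If $g_{t}$ is accepted then necessarily $Q_{VM}^{l}(g)\leq2/\rho(u,w)$; combined with $Q_{VM}^{l}(g')<Q_{VM}^{l}(g)$ this gives $Q_{VM}^{l}(g')<2/\rho(u,w)$, so $g'_{t}$ is accepted and $f_{AR}((u,w),g_{t})=1=f_{AR}((u,w),g'_{t})$. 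For part~2, fix $g$ and two agents with $\rho(u,w)>\rho(u',w')$. If the more risk-averse agent accepts $g_{t}$, then $Q_{VM}^{l}(g)\leq2/\rho(u,w)<2/\rho(u',w')$, so the less risk-averse agent also accepts, giving $f_{AR}((u,w),g_{t})=1=f_{AR}((u',w'),g_{t})$; otherwise $f_{AR}((u,w),g_{t})=0$ and the inequality is immediate. Theorem~3.3 of \citet{schreiber_acceptance_2013} supplies the monotone comparative static (more risk-averse agents accept fewer short-term gambles) underlying this step.

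The one delicate point --- and the reason only \emph{weak} consistency holds --- is the knife-edge $Q_{VM}^{l}(g)=2/\rho$, where the first-order term vanishes and acceptance is governed by the higher-order moments of $g_{t}$. I expect this to be the main obstacle, but it is harmless for the weak statement: in each case above the strict hypothesis ($Q_{VM}^{l}(g)>Q_{VM}^{l}(g')$, resp.\ $\rho(u,w)>\rho(u',w')$) forces the \emph{dominating} object strictly into the interior ($Q_{VM}^{l}(g')<2/\rho$, resp.\ $Q_{VM}^{l}(g)<2/\rho(u',w')$), so that the dominating value equals $1$; the indeterminate object then contributes a value in $\{0,1\}$, which is automatically $\leq1$. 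Rigorously controlling the higher-order terms so that the leading-order sign is correct uniformly for all small $t$ is precisely what \citet[Theorems~2.2 \& 3.3]{schreiber_acceptance_2013} provide, so I would cite those results rather than redo the moment estimates of Appendix~\ref{subsec:Proof-of-Theorem}.
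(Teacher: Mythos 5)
Your proposal is correct, and it is actually more self-contained than what the paper does: the paper offers no argument for this corollary at all, deferring entirely to the cited Theorems 2.2 and 3.3 of Schreiber (2013). Your route --- reducing acceptance to the sign of $f_{SCE}=\mathbb{E}[g_{t}]+f_{RP}$ and reading the threshold $Q_{VM}^{l}(g)\lessgtr 2/\rho(u,w)$ off the risk-premium analysis --- can be made fully rigorous with the paper's own machinery: part 1 of the proof of Theorem \ref{thm:main-result} gives $f_{RP}((u,w),g_{t})=x_{t}\cdot t$ with $x_{t}\rightarrow\frac{1}{2}\frac{u''(w)}{u'(w)}\sigma_{0}^{2}$, and $\mathbb{E}[g_{t}]/t\rightarrow\mu_{0}$, so $f_{SCE}/t\rightarrow\mu_{0}\left(1-\frac{\rho}{2}Q_{VM}^{l}(g)\right)$ and the sign of $f_{SCE}$ for small $t$ is determined whenever this limit is nonzero; the external citation then becomes optional rather than load-bearing. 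Your subsequent case analysis is sound, and you correctly identify both why only \emph{weak} consistency is attainable (the knife-edge $Q_{VM}^{l}(g)=2/\rho$, where the first-order term vanishes) and why the knife-edge is harmless (the strict hypothesis pushes the \emph{dominating} object strictly below the threshold, so it is accepted, while the indeterminate one contributes at most $1$). The only point worth making explicit is that the threshold characterization holds only for $t$ below a bound $\bar{t}$ that depends on the process and the agent, so in the comparisons one must take the minimum of the two relevant bounds before running the case analysis --- a routine fix that your argument implicitly accommodates.
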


\subsection{Continuous-Time Processes with Jumps\label{subsec:Continuous-Time-Processes-with}}

The set of continuous-time gambles $\Gamma$ analyzed in this paper
does not allow for jumps. In what follows we show that the absence
of jumps is necessary for our main result. Specifically, we demonstrate
that risk-averse agents rank continuous-time processes with jumps
differently, even for short-term investments, which rules out the
existence of consistent risk indices. Consider, for example, the acceptance/rejection
function $f_{AR}$ (similar conclusions can be drawn for the other
decision functions). \citet{Hart_2011} observes that there are many
pairs of (discrete-time) gambles that are ranked differently by different
risk-averse agents (see \citealt[Footnote 23]{Hart_2011}). Let $h,\tilde{h}$
be such a pair of gambles. 

Consider the following compound Poisson processes $g,\tilde{g}$,
where each has an initial value of zero. The value of each process
changes only when there is a jump. The jumps arrive randomly with
a rate $\lambda$. In process $g$ (resp., $\tilde{g}$) the size
of each jump is distributed according to $h$ (resp., $\tilde{h}$).
Observe that for sufficiently short times the probability of having
two jumps is negligible, and the decision whether to accept or to
reject a gamble depends only on what may happen after a single jump.
This implies that agents who rank the gambles $h,\tilde{h}$ differently,
would also rank $g_{t},\tilde{g}_{t}$ differently, for any sufficiently
short time $t$. This rules out the existence of a consistent risk
index in this setup.

\section{Conclusion\label{sec:Discussion}}

Our main result is that in four central decision problems all risk-averse
agents have the same (problem-dependent) ranking over short-term investments
in risky assets whose returns evolve continuously, and these rankings
are represented by simple well-known indices of risk. The indices
obtained are the same as in the classic model of CARA utilities and
normally distributed gambles. Each problem relates to a different
dimension of risk, and, thus, its ranking is represented by a different
risk index. Finally, adapting a classic result to the present setup,
we show in all of the decision functions analyzed above, the decisions
of agents are consistent with their Arrow\textendash Pratt coefficients
of risk aversion.

\textit{\emph{The proposed indices in our paper are all based on the
first two moments. This is a result of the known property of continuous
stochastic processes for which higher moments go quickly to zero as
the time parameter goes to zero. Hence, multiple indices of risk that
do use higher moments might coincide with our indices when they are
applied to continuous-time processes and short-term investments. For
instance, \citet{schreiber_acceptance_2013} shows that the index
of \citet{Aumann_Serrano_2008} and that of \citet{Foster_Hart_2009}
(which, in general, both depend on all moments of the gamble) coincide
with the variance-to-mean index $Q_{VM}$ for continuous processes
in the limit of $t\rightarrow0$, and \citet{shorrer_1} shows that
there is a continuum of risk indices (which depend also on higher
moments) that are consistent with acceptance/rejection decisions of
agents with respect to small discrete gambles. Indeed, under the assumption
that returns evolve continuously in time, the only relevant parameters
for measuring risk are the first two moments. Our results can be interpreted
as characterizing a necessary condition for a plausible risk index,
namely, that a plausible risk index (with respect to one of the four
decision functions analyzed in the paper) should depend on the first
two moments in the same way as presented in our main result. We leave
for future research the interesting question of how to choose among
the various risk indices that satisfy this necessary condition. One
possible direction for analyzing this question is the axiomatic approach
applied in \citet{shorrer_1} to acceptance/rejection decisions.}}

\textit{\emph{\bibliographystyle{authordate3}
\bibliography{localrisk}
}}

\appendix

\section{Multiplicative Gambles\label{sec:Multiplicative-Gambles}}

In the main text we followed the recent literature of riskiness (initiated
by \citealp{Aumann_Serrano_2008} and \citealp{Foster_Hart_2009})
and focused on decision problems with regard to additive gambles in
units of dollars. However, in most financial applications, it is common
to describe the returns of an asset in relative terms, namely, percentages
(see, e.g., \citealp{markowitz_59} and \citealp{merton1992continuous}),
as this is the way in which returns are described in practice in exchange
markets. Hence, in this section we show that our results hold also
with regard to multiplicative returns.

In some sense, the difference between multiplicative and additive
returns is only a matter of presentation: if one invests x dollars
in a multiplicative gamble \emph{r}, one's payoff will be $x(1+r)$
dollars, and this is just the same payoff as if one invests in an
additive return of $x\cdot r$ dollars. Nevertheless, we think that
presenting the results for multiplicative returns is important for
two reasons: first, as argued in \citet{schreiber_relative_2014},
each investment might have two different aspects of riskiness, absolute
and relative; given two assets, one of them might be riskier in relative
terms but less risky in absolute terms. Therefore it is worthwhile
to study the difference between multiplicative and additive returns
in our setup. As it turns out, this potential difference vanishes
when focusing on short-term investments and we derive in the multiplicative
setup results analogous to those that we have in the additive setup.
Second, in many situations of decision making under risk, the risk-free
interest rate should be taken into account. Since the risk-free interest
rate is calculated in terms of percentages, it is natural to combine
it in decision problems with relative return, as we do here.

\subsection{Adaptation to the Model}

Let $r_{f}>0$ be the risk-free interest rate available for all investors.
A \emph{multiplicative} \emph{risky asset (multiplicative gamble)}
$r$ is a real-valued random variable with an expectation that is
greater than $r_{f},$ and some negative values greater than $-1,$
i.e., $\boldsymbol{E}\left[r\right]>r_{f}$, $\boldsymbol{P}\left[r<0\right]>0$,
and $r\geq-1$. We interpret $r$ as the per-dollar return of the
asset. Let $\mathcal{R}$ denote the set of all multiplicative risky
assets.

We adapt the definitions of our  decision functions to the case of
multiplicative gambles.
\begin{enumerate}
\item The acceptance function $f_{AR}^{m}:\mathcal{DM}\times\mathcal{R}\rightarrow\left\{ 0,1\right\} $
is given by
\[
f_{AR}^{m}\left(\left(u,w\right),r\right)=\begin{cases}
1 & \mathbb{E}\left[u\left(w\cdot\left(1+r\right)\right)\right]\geq u\left(w\cdot(1+r_{f})\right)\\
0 & \mathbb{E}\left[u\left(w\cdot\left(1+r\right)\right)\right]<u\left(w\cdot(1+r_{f})\right),
\end{cases}
\]
where we consider a situation in which an agent faces a binary choice
between investing his entire wealth in a multiplicative gamble $r$
and investing it in the riskless asset with return $r_{f}$.
\item The capital allocation function $f_{CA}^{m}:\mathcal{DM}\times\mathcal{R}\rightarrow\mathbb{R}^{+}\cup\left\{ \infty\right\} $
is given by
\begin{equation}
f_{CA}^{m}\left(\left(u,w\right),r\right)=\arg\max_{\alpha\in\mathbb{R}^{+}}\mathbb{E}\left[u\big(w\cdot(1+r_{f})+\alpha\cdot w\cdot(r-r_{f})\big)\right];\label{eq:f_CA_mult-1}
\end{equation}
if Equation (\ref{eq:f_CA_mult-1}) does not admit of a solution (i.e.,
$\mathbb{E}\left[u\big(w\cdot(1+r_{f})+\alpha\cdot w\cdot(r-r_{f})\big)\right]$
is increasing for all $\alpha$-s), then we set $f_{CA}^{m}\left(\left(u,w\right),r\right)=\infty$.
This function deals with a situation in which an agent decides on
the optimal share $\alpha\geq0$ of his wealth $w$ to invest in the
multiplicative gamble $r$ (where $\alpha>1$ can be induced by leverage),
where his remaining wealth is invested in the riskless asset.
\item The optimal certainty equivalent function ${\color{purple}f_{CE}^{m}:\mathcal{DM}\times\mathcal{R}\rightarrow\mathbb{R}^{+}\cup\left\{ \infty\right\} }$
is defined implicitly as the unique solution to the equation
\[
u\left(w\cdot\left(1+f_{CE}^{m}\right)\right)=\max_{\alpha\in\mathbb{R}^{+}}\mathbb{E}\left[u\left(w\cdot(1+r_{f})+\alpha\cdot w\cdot(r-r_{f})\right)\right]
\]
\begin{equation}
\equiv\mathbb{E}\left[u\big(w(1+r_{f})+f_{CA}^{m}\left(\left(u,w\right),r\right)\cdot w\cdot(r-r_{f})\big)\right];\label{eq:f_CE_mult-1}
\end{equation}
if Equation (\ref{eq:f_CE_mult-1}) does not admit of a solution (i.e.,
$\mathbb{E}\left[u\big(w\cdot(1+r_{f})+\alpha\cdot w\cdot(r-r_{f})\big)\right]$
is increasing for all $\alpha$-s), then we set $f_{CE}^{m}\left(\left(u,w\right),r\right)=\infty$.
This function describes the rate of a constant return that is equivalent
to investing optimally in a multiplicative gamble $r$, where the
remaining wealth is invested in the riskless asset.
\item The risk-premium function $f_{RP}^{m}:\mathcal{DM}\times\mathcal{R}\rightarrow\mathbb{R}^{-}$
is defined implicitly as the unique solution to the equation
\[
\mathbb{E}\left[u\left(w\cdot\left(1+r\right)\right)\right]=u\left(w\cdot\left(1+\mathbb{E}\left[r\right]+f_{RP}^{m}\right)\right),
\]
where $f_{RP}^{m}$ represents the constant (negative) return that
makes the agent indifferent between investing all his wealth in the
multiplicative gamble $r$ and investing in an asset with a constant
return that is equal to the expectation of $r$ plus $f_{RP}^{m}$.
\end{enumerate}
Let $R_{N}\subseteq\mathcal{R}$ be the set of normally distributed
multiplicative gambles (defined analogously to the definition of $\mathcal{G}_{N}$).
The Arrow\textendash Pratt coefficient of \emph{relative} risk aversion,
denoted by $\varrho:\mathcal{DM}\rightarrow\mathbb{R^{++}}$, is defined
as follows: 
\[
\varrho\left(u,w\right)=\frac{-w\cdot u''\left(w\right)}{u'\left(w\right)}.
\]
We adapt the three indices of risk in the main text to the multiplicative
setup and the existence of a risk-free interest rate. Specifically:
\begin{enumerate}
\item The \emph{variance-to-mean index} $Q_{VM}^{m}\left(r\right)$ is equal
to
\[
Q_{VM}^{m}\left(r\right)=\frac{\sigma^{2}\left[r\right]}{\boldsymbol{E}\left[r\right]-r_{f}},\,\,\,\,\,\,\,\,\,\,\,\,\textrm{where}\,\,\sigma^{2}\left[r\right]\equiv E\left[\left(r-E\left[r\right]\right)^{2}\right].
\]
\item The\emph{ inverse Sharpe index} $Q_{IS}^{m}\left(r\right)$ is equal
to
\[
Q_{IS}^{m}\left(r\right)=\frac{\sigma\left[r\right]}{\boldsymbol{E}\left[r\right]-r_{f}}.
\]
\item The \emph{standard deviation} \emph{index} $Q_{SD}^{m}\left(r\right)$
is equal to $Q_{SD}^{m}\left(r\right)=\sigma\left[r\right].$
\end{enumerate}

\subsection{Adapted Results}

The adaptation of Claim \ref{claim:normal-exponent} and Theorem \ref{thm:main-result}
to multiplicative gambles is as follows. Observe that all the results
remain the same, except that the Arrow\textendash Pratt coefficient
of relative risk aversion replaces the coefficient of absolute risk
aversion.
\begin{claim}
\label{claim: multiplicative}The following conditions hold over the
domain $DM_{CARA}\times R_{N}$:
\begin{enumerate}
\item The standard deviation index $Q_{SD}^{m}$ is consistent with  decision
function $f_{RP}^{m}.$
\item The variance-to-mean index $Q_{VM}^{m}$ is consistent with both the
capital allocation function $f_{CA}^{m}$ and the acceptance/rejection
function $f_{AR}^{m}$.
\item The inverse Sharpe index $Q_{IS}^{m}$ is consistent with the  decision
function $f_{CE}^{m}$.
\item The Arrow\textendash Pratt coefficient of relative risk aversion $\varrho$
is consistent with all four decision functions: $f_{AR}^{m}$, $f_{CA}^{m}$,
$f_{CE}^{m}$, and $f_{RP}^{m}$.
\end{enumerate}
\end{claim}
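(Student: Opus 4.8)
The plan is to avoid redoing any CARA--normal integration and instead reduce Claim~\ref{claim: multiplicative} to the already-established additive benchmark, Claim~\ref{claim:normal-exponent}, by a linear change of variables. Two facts make this clean. First, for a CARA utility $u(x)=1-e^{-\rho x}$ the coefficient of relative risk aversion is $\varrho(u,w)=\frac{-w\,u''(w)}{u'(w)}=\rho\,w$, so replacing $\rho$ by $\varrho$ is exactly the effect of rescaling the underlying gamble by the wealth $w$. Second, an affine image of a normal multiplicative gamble $r\sim\mathrm{Norm}(\mu,\sigma)$ is again normal, so the transformed objects still lie in $G_N$, the domain on which Claim~\ref{claim:normal-exponent} applies. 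Throughout I take $w>0$, as is implicit in the multiplicative, wealth-scaling interpretation and in $\varrho=\rho w>0$.

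First I would treat $f_{AR}^m$, $f_{CA}^m$, and $f_{CE}^m$ together. For these I set the additive gamble $g\equiv w\,(r-r_f)\sim\mathrm{Norm}\bigl(w(\mu-r_f),\,w\sigma\bigr)$, which is a valid gamble since its mean is positive (as $\mu>r_f$) and, being normal, it takes negative values; and I set the baseline wealth $w'\equiv w(1+r_f)$, so that $w(1+r)=w'+g$ and $w(1+r_f)=w'$. Substituting these identities into the definitions gives $f_{AR}^m((u,w),r)=f_{AR}((u,w'),g)$, $f_{CA}^m((u,w),r)=f_{CA}((u,w'),g)$, and, using that $u$ is strictly increasing, $w\bigl(1+f_{CE}^m\bigr)=w'+f_{CE}((u,w'),g)$. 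Feeding these through Claim~\ref{claim:normal-exponent} and using $\varrho=\rho w$ to absorb the factors of $w$ coming from the mean $w(\mu-r_f)$ and variance $w^2\sigma^2$ of $g$ yields $f_{CA}^m=\frac{1}{\varrho}\cdot\frac{\mu-r_f}{\sigma^2}$, the acceptance condition $f_{AR}^m=1\Leftrightarrow\frac{2}{\varrho}\cdot\frac{\mu-r_f}{\sigma^2}\ge1$, and $f_{CE}^m=r_f+\frac{1}{2\varrho}\bigl(\frac{\mu-r_f}{\sigma}\bigr)^2$ (the $w$'s cancel in the ratio $\mu_g/\sigma_g$).

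The one case requiring a different substitution is $f_{RP}^m$, whose definition compares $r$ to its own expected return rather than to $r_f$. Here I would instead set $g\equiv w\,r\sim\mathrm{Norm}(w\mu,\,w\sigma)$ and keep the baseline wealth $w$, so that $\mathbb{E}[u(w+g)]=u\bigl(w+\mathbb{E}[g]+w\,f_{RP}^m\bigr)$ and hence $w\,f_{RP}^m=f_{RP}((u,w),g)$. Claim~\ref{claim:normal-exponent}(1) gives $f_{RP}((u,w),g)=-\tfrac12\rho\,w^2\sigma^2$, so $f_{RP}^m=-\tfrac12\rho w\,\sigma^2=-\tfrac12\varrho\,\sigma^2$.

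Finally I would read off the four consistency statements from these closed forms, checking each monotonicity direction against Definition~\ref{def:Risk-index-}: $f_{RP}^m=-\tfrac12\varrho\sigma^2$ is strictly decreasing in $\sigma=Q_{SD}^m$; $f_{CA}^m=\frac{1}{\varrho\,Q_{VM}^m}$ and the acceptance threshold are strictly decreasing in $Q_{VM}^m=\frac{\sigma^2}{\mu-r_f}$; and $f_{CE}^m=r_f+\frac{1}{2\varrho}(Q_{IS}^m)^{-2}$ is strictly decreasing in $Q_{IS}^m=\frac{\sigma}{\mu-r_f}$. Each closed form is likewise strictly decreasing in $\varrho$, which gives consistency of the coefficient of relative risk aversion with all four functions. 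I do not expect a genuine conceptual obstacle, since Claim~\ref{claim:normal-exponent} carries the analytic weight; the only points demanding care are using the correct substitution for $f_{RP}^m$ (which, unlike the other three, excludes $r_f$), verifying that each transformed $g$ is an admissible normal gamble in $G_N$, and keeping the sign conventions straight, especially for $f_{CE}^m$ and for the acceptance inequality, where $Q_{IS}^m$ and $Q_{VM}^m$ enter through their reciprocals.
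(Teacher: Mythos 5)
Your proposal is correct and follows essentially the same route as the paper, which proves Claim \ref{claim: multiplicative} by reducing each multiplicative decision function to its additive counterpart via the substitution $w'=w(1+r_f)$, $g=w(r-r_f)$ and then invoking Claim \ref{claim:normal-exponent}; your alternative substitution for $f_{RP}^m$ (taking $g=w\cdot r$ with baseline wealth $w$) is equivalent to the paper's identity $f_{RP}^m=f_{RP}\left(\left(u,w(1+r_f)\right),w(r-r_f)\right)/w$ because CARA utilities make the baseline wealth irrelevant. One small point in your favor: your formula $f_{CE}^m=r_f+f_{CE}/w$ correctly carries the additive $r_f$ shift that the paper's stated identity for $f_{CE}^m$ omits, though this does not affect the consistency conclusion since an additive constant preserves the ranking.
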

The proof of Claim \ref{claim: multiplicative} is made analogous
to the corresponding proof in the additive case by using the following
identities (details are omitted for brevity):
\begin{enumerate}
\item $f_{AR}^{m}((u,w),r)\equiv f_{AR}((u,w(1+r_{f})),w(r-r_{f}))$,
\item $f_{CA}^{m}((u,w),r)\equiv f_{CA}((u,w(1+r_{f})),w(r-r_{f}))$,
\item $f_{CE}^{m}((u,w),r)\equiv f_{CE}((u,w(1+r_{f})),w(r-r_{f}))/w$,
and
\item $f_{RP}^{m}((u,w),r)\equiv f_{RP}((u,w(1+r_{f})),w(r-r_{f}))/w$.
\end{enumerate}
Recall that in the continuous-time setup, the decision problems are
parameterized by $t$, which is the investment horizon. Previously,
we assumed that a continuous-time random process \emph{g} represents
the additive return of a financial investment. Now the continuous-time
random process $r$ represents the excess multiplicative return: $r_{t}=(P_{t}-P_{0})/P_{0}$.
We assume that the \textit{compound risk-free interest rate} is $r_{f}$
and hence the riskless return over period $t$ is $r_{f}(t)=e^{\mu_{f}\cdot t}-1$.
The adapted definitions of the risk indexes in the multiplicative
setup for the local risk indices are as follows:
\begin{enumerate}
\item The \emph{variance-to-mean local index} $Q_{VM}^{l,m}(g)$ is equal
to
\[
Q_{VM}^{l,m}\left(g\right)=\frac{\sigma_{0}^{2}}{\mu_{0}-\mu_{f}}.
\]
\item The\emph{ inverse Sharpe local index} $Q_{IS}^{l,m}(g)$ is equal
to
\[
Q_{IS}^{l,m}\left(g\right)=\frac{\sigma_{0}}{\mu_{0}-\mu_{f}}.
\]
\item The \emph{standard deviation local index} $Q_{SD}^{l,m}(g)$ is equal
to $Q_{SD}^{l,m}\left(g\right)=\sigma_{0}.$ 
\end{enumerate}
The analogous result to Theorem \ref{thm:main-result} is as follows.
\begin{thm}
\label{thm:Theorem multiplicative}The following conditions hold over
the domain of continuous short-term decisions with respect to multiplicative
gambles:
\begin{enumerate}
\item The standard deviation index $Q_{SD}^{l,m}$ is consistent with the
risk premium function $f_{RP}^{m}.$
\item The variance-to-mean index $Q_{VM}^{l,m}$ is consistent with the
capital allocation function $f_{CA}^{m}$, and it is weakly consistent
with the acceptance/rejection function $f_{AR}^{m}$.
\item The inverse Sharpe index $Q_{IS}^{l,m}$ is consistent with the 
decision function $f_{CE}^{m}$.
\item The Arrow\textendash Pratt coefficient of relative risk aversion $\varrho$
is consistent with decision functions: $f_{CA}^{m}$, $f_{CE}^{m}$,
and $f_{RP}^{m}$, and it is weakly consistent with $f_{AR}^{m}$.
\end{enumerate}
\end{thm}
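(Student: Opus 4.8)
The plan is to reduce Theorem~\ref{thm:Theorem multiplicative} to the additive Theorem~\ref{thm:main-result} by means of the four identities listed after Claim~\ref{claim: multiplicative}. Fix an agent $(u,w)$ and a multiplicative process $r$ with local parameters $\mu_0>\mu_f$ and $\sigma_0>0$. Over a horizon $t$ the riskless return is $r_f(t)=e^{\mu_f t}-1$, so the identities read $f_\bullet^m((u,w),r_t)=f_\bullet((u,w e^{\mu_f t}),\tilde g_t)$ for $\bullet\in\{AR,CA\}$ and $f_\bullet^m((u,w),r_t)=f_\bullet((u,w e^{\mu_f t}),\tilde g_t)/w$ for $\bullet\in\{CE,RP\}$, where $\tilde g_t\equiv w\,(r_t-r_f(t))$ is an additive gamble. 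First I would verify these identities directly from the defining equations of the multiplicative decision functions by the change of variable implicit in ``investing $\alpha w$ in $r-r_f$ starting from wealth $w(1+r_f)$''; they are exact algebraic identities, not approximations. Since $r_t\ge -1$ and $w>0$, the process $\tilde g$ is bounded below and inherits adaptedness, continuity, and integrability from $r$, so $\tilde g\in\Gamma$.

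Next I would compute the local parameters of $\tilde g$. As $t\to0$ we have $\mathbb{E}[r_t]\approx\mu_0 t$, $r_f(t)\approx\mu_f t$, and $\sigma^2[r_t]\approx\sigma_0^2 t$, so $\tilde g$ has local drift $\tilde\mu_0=w(\mu_0-\mu_f)>0$ and local diffusion $\tilde\sigma_0=w\sigma_0>0$. Applying Theorem~\ref{thm:main-result} to $\tilde g$ gives the additive local indices $Q_{SD}^l(\tilde g)=w\sigma_0$, $Q_{VM}^l(\tilde g)=w\sigma_0^2/(\mu_0-\mu_f)$, and $Q_{IS}^l(\tilde g)=\sigma_0/(\mu_0-\mu_f)$, which equal $w\cdot Q_{SD}^{l,m}(g)$, $w\cdot Q_{VM}^{l,m}(g)$, and $Q_{IS}^{l,m}(g)$, respectively. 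Because $w>0$ is a fixed constant, rescaling by $w$ is a strictly increasing transformation that preserves the ranking of gambles, and dividing the additive $f_{CE},f_{RP}$ by $w$ to obtain $f_{CE}^m,f_{RP}^m$ preserves the relation $>>$ (and $<<$), since it leaves the limiting ratio $f_g/f_{g'}$ unchanged. Consistency of the multiplicative indices then follows from that of the additive ones.

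The main obstacle is that the reduction replaces the fixed wealth $w$ by the $t$-dependent wealth $w e^{\mu_f t}$, whereas Theorem~\ref{thm:main-result} is stated for a fixed agent. I would resolve this by re-running the Taylor-expansion argument from the proof sketch of Theorem~\ref{thm:main-result} at wealth $w e^{\mu_f t}$: because $u\in C^2$ the Arrow--Pratt ratio $u''/u'$ is continuous, so $u''(w e^{\mu_f t})/u'(w e^{\mu_f t})\to u''(w)/u'(w)$ as $t\to0$ and the leading coefficients of the expansions are exactly those evaluated at $w$. Substituting $\tilde\mu_0,\tilde\sigma_0$ yields $f_{RP}^m\approx-\tfrac12\varrho(u,w)\,\sigma_0^2\,t$, $f_{CA}^m\approx(\mu_0-\mu_f)/\big(\varrho(u,w)\,\sigma_0^2\big)$, and $f_{CE}^m\approx\tfrac{1}{2\varrho(u,w)}\big((\mu_0-\mu_f)/\sigma_0\big)^2 t$. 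Here the scalings $\tilde\mu_0\propto w$ and $\tilde\sigma_0\propto w$ of the transformed gamble, combined with the rescalings in the identities, convert the absolute coefficient $\rho(u,w)$ into the relative coefficient $\varrho(u,w)=w\,\rho(u,w)$; this is precisely why $\varrho$ replaces $\rho$, and reading off the monotone dependence on $\varrho(u,w)$ delivers its consistency across $f_{CA}^m$, $f_{CE}^m$, and $f_{RP}^m$.

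Finally, for $f_{AR}^m$ I would invoke the same reduction: the identity $f_{AR}^m((u,w),r_t)=f_{AR}((u,w e^{\mu_f t}),\tilde g_t)$ converts the multiplicative acceptance/rejection problem into the additive one, so the weak consistency of $Q_{VM}^{l,m}$ and of $\varrho$ with $f_{AR}^m$ follows directly from the corresponding additive weak-consistency corollary implied by \citet{schreiber_acceptance_2013}. I expect the only genuinely delicate point to be the continuity argument controlling the drifting wealth $w e^{\mu_f t}$; once that is in place, every claim reduces mechanically to Theorem~\ref{thm:main-result} and its corollary.
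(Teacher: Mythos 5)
Your proposal follows exactly the route the paper takes: it reduces Theorem \ref{thm:Theorem multiplicative} to Theorem \ref{thm:main-result} via the four identities $f_{\bullet}^{m}((u,w),r_{t})\equiv f_{\bullet}((u,w(1+r_{f}(t))),w(r_{t}-r_{f}(t)))$ (divided by $w$ for $f_{CE}$ and $f_{RP}$), which is precisely the argument the paper gives (with details omitted). Your added care about the $t$-dependent wealth $w(1+r_{f}(t))\rightarrow w$ and the emergence of $\varrho=w\rho$ from the scaling $\tilde{\mu}_{0},\tilde{\sigma}_{0}\propto w$ correctly fills in the steps the paper leaves implicit.
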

The proof of Theorem \ref{thm:Theorem multiplicative} is made analogous
to the corresponding proof in the additive case by using the following
identities (details are omitted for brevity):
\begin{enumerate}
\item $f_{AR}^{m}((u,w),r_{t})\equiv f_{AR}((u,w(1+r_{f}(t))),w(r_{t}-r_{f}(t)))$,
\item $f_{CA}^{m}((u,w),r_{t})\equiv f_{CA}((u,w(1+r_{f}(t))),w(r_{t}-r_{f}(t)))$,
\item $f_{CE}^{m}((u,w),r_{t})\equiv f_{CE}((u,w(1+r_{f}(t))),w(r_{t}-r_{f}(t)))/w$,
and
\item $f_{RP}^{m}((u,w),r_{t})\equiv f_{RP}((u,w(1+r_{f}(t))),w(r_{t}-r_{f}(t)))/w$.
\end{enumerate}

\section{Proofs\label{sec:Proofs}}

\subsection{Proof of Claim \ref{claim:normal-exponent}\label{subsec:Proof-of-Claim}}

The following well-known fact, which describes the expectation of
a log-normal distribution, will be useful in our proofs (the standard
proof, which relies on the\emph{ }Laplace transform of the normal
distribution, is omitted for brevity; see, e.g., \citealp[page 132]{forbes2011statistical}).
\begin{fact}
\label{lem:norm-dist}If $y$ is normally distributed with expectation
$\mu$ and standard deviation $\sigma$, then $\mathbb{E}\left[e^{y}\right]=e^{\mu+0.5\sigma^{2}}.$
\end{fact}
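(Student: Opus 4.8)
The statement is just the moment generating function of the normal law evaluated at argument $1$, so the plan is the standard completion-of-the-square computation. First I would write the expectation explicitly as an integral of $e^{y}$ against the Gaussian density of $y\sim\mathrm{Norm}(\mu,\sigma)$,
\[
\mathbb{E}\left[e^{y}\right]=\frac{1}{\sigma\sqrt{2\pi}}\int_{-\infty}^{\infty}e^{\,y-\frac{(y-\mu)^{2}}{2\sigma^{2}}}\,dy,
\]
merging the $e^{y}$ factor with the Gaussian exponent so that everything lives inside a single exponential whose argument is the quadratic $y-\frac{(y-\mu)^{2}}{2\sigma^{2}}$.

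The key step is to reorganize this quadratic into a perfect square plus a constant. Expanding $(y-\mu)^{2}$ and collecting the linear and constant terms, one checks that
\[
y-\frac{(y-\mu)^{2}}{2\sigma^{2}}=-\frac{\left(y-(\mu+\sigma^{2})\right)^{2}}{2\sigma^{2}}+\mu+\frac{\sigma^{2}}{2}.
\]
The point is that adding the linear term $y$ does not change the coefficient of $y^{2}$, so the quadratic still completes to a Gaussian with the \emph{same} variance $\sigma^{2}$ but a shifted mean $\mu+\sigma^{2}$. Substituting this back, the constant $\mu+\tfrac{\sigma^{2}}{2}$ pulls out of the integral as the factor $e^{\mu+\sigma^{2}/2}$, and what remains is
\[
\frac{1}{\sigma\sqrt{2\pi}}\int_{-\infty}^{\infty}e^{-\frac{\left(y-(\mu+\sigma^{2})\right)^{2}}{2\sigma^{2}}}\,dy,
\]
which is exactly the total mass of a $\mathrm{Norm}(\mu+\sigma^{2},\sigma)$ density and hence equals $1$. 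This yields $\mathbb{E}[e^{y}]=e^{\mu+\sigma^{2}/2}$, as claimed. An equivalent and slightly cleaner route is to standardize first, writing $y=\mu+\sigma z$ with $z\sim\mathrm{Norm}(0,1)$ so that $\mathbb{E}[e^{y}]=e^{\mu}\,\mathbb{E}[e^{\sigma z}]$, and then reduce everything to the single identity $\mathbb{E}[e^{\sigma z}]=e^{\sigma^{2}/2}$ obtained by completing the square in $\sigma z-\tfrac{1}{2}z^{2}$.

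There is essentially no genuine obstacle here: the argument is entirely elementary, and the only points that deserve a word of care are the algebraic bookkeeping in the completion of the square and the observation that the shifted Gaussian still integrates to one, equivalently that $e^{y}$ times the density is integrable (immediate, since the merged exponent is a downward parabola in $y$). Because the paper invokes this Fact only to evaluate $\mathbb{E}[u(w+g)]$ for a CARA utility $u$ and a normal gamble $g$, no uniformity or rate is required and the pointwise identity suffices.
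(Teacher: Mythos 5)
Your proof is correct: the completion-of-the-square identity checks out, and the argument is exactly the standard evaluation of the normal law's Laplace transform (moment generating function) at $1$. The paper omits the proof of this Fact entirely, citing a standard reference for precisely this computation, so your proposal fills in the same argument the authors had in mind.
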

Next, we prove Claim \ref{claim:normal-exponent}. Let $g$ be a normally
distributed random variable with expectation $\mu$ and standard deviation
$\sigma$. Let $u$ be a CARA utility with parameter $\rho$, i.e.,
$u\left(x\right)=1-e^{-\rho\cdot x}$. Let $w$ be the arbitrary initial
wealth.
\begin{enumerate}
\item $Q_{SD}$ and $\rho$ are consistent with $f_{RP}.$ The risk premium
$x$ is defined implicitly by 
\begin{eqnarray*}
\mathbb{E}\left[u\left(w+g\right)\right]=\mathbb{E}\left[1-e^{-\rho(w+g)}\right] & = & u\left(w+E\left[g\right]+x\right)=1-e^{-\rho(w+\mu+x)}
\end{eqnarray*}
\[
\Leftrightarrow1-\mathbb{E}\left[e^{-\rho(w+g)}\right]=1-e^{-\rho(w+\mu+x)}\Leftrightarrow\mathbb{E}\left[e^{-\rho(w+g)}\right]=e^{-\rho(w+\mu+x)}.
\]
 By Fact \ref{lem:norm-dist}
\[
\mathbb{E}\left[e^{-\rho(w+g)}\right]=e^{-\rho(w+\mu)+0.5\rho^{2}\sigma^{2}},
\]
which implies 
\[
e^{-\rho(w+\mu)+0.5\rho^{2}\sigma^{2}}=e^{-\rho(w+\mu+x)}\Leftrightarrow-\rho(w+\mu)+0.5\rho^{2}\sigma^{2}=-\rho(w+\mu+x)\Leftrightarrow x=0.5\rho\sigma^{2}.
\]
Thus, $f_{RP}\left(\left(u,w\right),g\right)=0.5\rho\sigma^{2}$,
which implies that $Q_{SD}=\sigma$ is a consistent risk index (and
that $\rho$ is a consistent risk-aversion index with respect to $f_{RP}.$
\item $Q_{VM}$ and $\rho$ are consistent with $f_{AR}$. The agent accepts
the gamble iff 
\[
\mathbb{E}\left(1-e^{-\rho(w+g)}\right)>\mathbb{E}\left(1-e^{-\rho w}\right)
\]
which, by Fact \ref{lem:norm-dist} is equivalent to 
\[
e^{-\rho(w+\mu)+0.5\rho^{2}\sigma^{2}}<e^{-\rho w}\Leftrightarrow0.5\rho<\frac{\mu}{\sigma^{2}}.
\]
Thus, $f_{AR}\left(\left(u,w\right),g\right)=\boldsymbol{1}_{\left\{ 0.5\rho<\frac{\mu}{\sigma^{2}}\right\} }$,
which implies that $Q_{VM}=\frac{\sigma^{2}}{\mu}$ is a consistent
risk index (and that $\rho$ is a consistent risk-aversion index)
with respect to $f_{AR}.$
\item $Q_{VM}$ and $\rho$ are consistent with $f_{CA}$. The capital allocation
function is given by 
\[
f_{CA}\left(\left(u,w\right),g\right)=\arg\max_{\alpha\in\mathbb{R}^{+}}\mathbb{E}\left[u\big(w+\alpha g\big)\right]=\arg\max_{\alpha\in\mathbb{R}^{+}}\mathbb{E}\left[1-e^{-\rho\big(w+\alpha g\big)}\right]\text{.}
\]
It follows from Fact\textbf{ }\ref{lem:norm-dist} that the r.h.s.
of the above equation is equivalent to 
\[
\arg\max_{\alpha\in\mathbb{R}^{+}}\mathbb{E}\left[1-e^{-\rho\big(w+\alpha g\big)}\right]=\arg\max_{\alpha\in\mathbb{R}^{+}}\left(1-e^{-\rho w-\rho\alpha\mu+0.5\rho^{2}\alpha^{2}\sigma}\right)
\]
\[
=\arg\min{}_{\alpha\in\mathbb{R}^{+}}\left(-\rho\alpha\mu+0.5\rho^{2}\alpha^{2}\sigma\right).
\]
The first-order condition is
\[
-\mu+\rho\alpha^{*}\sigma=0\Leftrightarrow\alpha^{*}=\frac{1}{\rho}\frac{\mu}{\sigma^{2}}.
\]
Thus, $f_{CA}\left(\left(u,w\right),g\right)=\frac{1}{\rho}\frac{\mu}{\sigma^{2}}$,
which implies that $Q_{VM}=\frac{\sigma^{2}}{\mu}$ is a consistent
risk index (and that $\rho$ is a consistent risk-aversion index)
with respect to $f_{CA}.$
\item $Q_{IS}$ and $\rho$ are consistent with $f_{CE}$. The optimal certainty
equivalent function is given by
\[
1-e^{-\rho\left(w+f_{CE}\right)}=u\left(w+f_{CE}\right)=\mathbb{E}\left[u\big(w+f_{CA}\left(\left(u,w\right),g\right)\cdot g\big)\right]
\]
\[
\mathbb{=E}\left[u\big(w+\frac{1}{\rho}\frac{\mu}{\sigma^{2}}\cdot g\big)\right]=\mathbb{E}\left[1-e^{-\rho\cdot\big(w+\frac{1}{\rho}\frac{\mu}{\sigma^{2}}\cdot g\big)}\right]=1-e^{-\rho w-\frac{\mu^{2}}{\sigma^{2}}+0.5\frac{\mu^{2}}{\sigma^{2}}},
\]
where the last equality uses Fact \ref{lem:norm-dist}. This implies
that 
\[
1-e^{-\rho\left(w+f_{CE}\right)}=1-e^{-\rho w-\frac{\mu^{2}}{\sigma^{2}}+0.5\frac{\mu^{2}}{\sigma^{2}}}\Leftrightarrow-\rho\left(w+f_{CE}\right)=-\rho w-\frac{\mu^{2}}{\sigma^{2}}+0.5\frac{\mu^{2}}{\sigma^{2}}\Leftrightarrow f_{CE}=\frac{1}{2\rho}\frac{\mu^{2}}{\sigma^{2}}.
\]
 Thus, $f_{CE}\left(\left(u,w\right),g\right)=\frac{1}{2\rho}\frac{\mu^{2}}{\sigma^{2}}$,
which implies that $Q_{IS}=\frac{\sigma}{\mu}$ is a consistent risk
index (resp., $\rho$ is a consistent risk-aversion index) with respect
to $f_{CE}.$
\end{enumerate}

\subsection{\label{subsec:Proof-of-Theorem}Proof of Theorem \ref{thm:main-result}}

The following three lemmas will be useful in our proofs. The first
lemma is a simple version of Ito's well-known lemma (see, e.g., \citealp[Equation 4.4.24]{shreve2004stochastic}).
\begin{lem}[Ito's lemma]
\label{lem:-Let-Ito} Let $s\left(t\right)$ be a random process
described by $ds_{t}=\mu_{t}dt+\sigma_{t}dW$. Let $f(t,s)$ be a
twice-differentiable function; then 
\[
df=\left(\mu_{t}\frac{\partial f}{\partial s}+0.5\sigma_{t}^{2}\frac{\partial^{2}f}{\partial s^{2}}+\frac{\partial f}{\partial t}\right)dt+\frac{\partial f}{\partial s}\sigma_{t}dW.
\]
\end{lem}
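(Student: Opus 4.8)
The plan is to prove this simple version of Ito's lemma by the standard route: a second-order Taylor expansion over a shrinking partition, combined with the quadratic-variation property of Brownian motion. First I would fix a horizon $T$, choose a partition $0 = t_0 < t_1 < \cdots < t_n = T$, and telescope $f(T,s_T) - f(0,s_0)$ into the sum of increments $f(t_{j+1}, s_{t_{j+1}}) - f(t_j, s_{t_j})$. To each increment I would apply a Taylor expansion to second order in $s$ and first order in $t$, writing $\Delta t_j = t_{j+1} - t_j$ and $\Delta s_j = s_{t_{j+1}} - s_{t_j}$, so that the dominant terms of the $j$-th summand are $\frac{\partial f}{\partial t}\Delta t_j + \frac{\partial f}{\partial s}\Delta s_j + \tfrac{1}{2}\frac{\partial^2 f}{\partial s^2}(\Delta s_j)^2$, with the remaining Taylor remainder genuinely higher order.

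Next I would substitute the dynamics $\Delta s_j \approx \mu_{t_j}\Delta t_j + \sigma_{t_j}\Delta W_j$, where $\Delta W_j = W_{t_{j+1}} - W_{t_j}$, and expand $(\Delta s_j)^2 = \mu_{t_j}^2(\Delta t_j)^2 + 2\mu_{t_j}\sigma_{t_j}\Delta t_j\,\Delta W_j + \sigma_{t_j}^2(\Delta W_j)^2$. The argument then reduces to tracking which of the resulting sums survive as the mesh $\max_j \Delta t_j \to 0$. The linear sums converge to the ordinary integral $\int_0^T \mu_t\frac{\partial f}{\partial s}\,dt$ and to the Ito integral $\int_0^T \sigma_t\frac{\partial f}{\partial s}\,dW$ (the latter well-defined as a martingale by the integrability hypotheses imposed earlier in the excerpt), while the $(\Delta t_j)^2$ and $\Delta t_j\,\Delta W_j$ contributions vanish in the limit.

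The main obstacle, and the only step requiring more than ordinary calculus, is the quadratic term $\sum_j \tfrac{1}{2}\frac{\partial^2 f}{\partial s^2}\sigma_{t_j}^2(\Delta W_j)^2$: I must show it converges to $\tfrac{1}{2}\int_0^T \sigma_t^2\frac{\partial^2 f}{\partial s^2}\,dt$, i.e. that $(\Delta W_j)^2$ may be replaced by $\Delta t_j$ in the limit. This is precisely the statement that Brownian motion has quadratic variation $t$, and I would establish it through the second-moment estimates $\mathbb{E}[(\Delta W_j)^2] = \Delta t_j$ and $\operatorname{Var}[(\Delta W_j)^2] = 2(\Delta t_j)^2$, so that the difference $\sum_j \sigma_{t_j}^2(\Delta W_j)^2 - \sum_j \sigma_{t_j}^2\Delta t_j$ has variance bounded by a constant times $\max_j \Delta t_j \cdot T$, which tends to zero; hence the difference vanishes in $L^2$. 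Collecting the surviving limits yields the integrated identity $f(T,s_T) - f(0,s_0) = \int_0^T \big(\frac{\partial f}{\partial t} + \mu_t\frac{\partial f}{\partial s} + \tfrac{1}{2}\sigma_t^2\frac{\partial^2 f}{\partial s^2}\big)\,dt + \int_0^T \sigma_t\frac{\partial f}{\partial s}\,dW$, whose differential form is exactly the claim. Since this is the textbook version of the lemma, I would defer the full measure-theoretic bookkeeping (uniform control of the Taylor remainders in $L^1$, adaptedness and continuity of the integrands) to the cited reference and present only this expansion-plus-quadratic-variation skeleton.
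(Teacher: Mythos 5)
The paper gives no proof of this lemma at all: it is stated as a known result and deferred entirely to the cited reference (\citealp[Equation 4.4.24]{shreve2004stochastic}), whose proof is exactly the partition--Taylor-expansion--quadratic-variation argument you outline. Your sketch is correct and is essentially the standard textbook proof the paper relies on, modulo the usual technicalities you already flag (in particular, one needs $f\in C^{1,2}$ and a localization step when $\sigma_t\,\partial^2 f/\partial s^2$ is unbounded).
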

The next two lemmas are standard calculus results.
\begin{lem}
\label{lem:standard-calculus}Let $F_{t}(y)$ be a set of real-valued,
continuous, and weakly increasing functions, with $0<t\leq T$ and
$y\in\mathbb{R}$. Assume that there exists a continuous and strictly
increasing function $F(y)$ such that (1) $\forall y,F(y)=\lim_{t\rightarrow0}F_{t}(y)$,
and (2) $\exists y^{*}$, s.t. $F(y^{*})=0$. Then, there exists $\bar{t}>0$
s.t. 
\[
\forall t<\bar{t}\ \exists y_{t}\text{ s.t. }F_{t}(y_{t})=0,\,\,\,\textrm{and}\,\,\,\lim_{t\rightarrow0}{y_{t}}=y^{*}.
\]
\end{lem}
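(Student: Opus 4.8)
The plan is to exploit the strict monotonicity of the limit function $F$ in order to sandwich the zeros $y_t$ of each $F_t$ inside an arbitrarily small interval around $y^*$. The starting observation is that, since $F$ is strictly increasing with $F(y^*)=0$, for every $\epsilon>0$ we have the strict inequalities $F(y^*-\epsilon)<0<F(y^*+\epsilon)$. This is the only place where the \emph{strictness} of $F$ (as opposed to mere weak monotonicity) is used, and it is what forces the two endpoints to separate in sign.

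Next I would transfer this sign information to the functions $F_t$ via the pointwise-convergence hypothesis. Applying assumption (1) at the two fixed points $y^*-\epsilon$ and $y^*+\epsilon$, the limits $\lim_{t\to0}F_t(y^*-\epsilon)=F(y^*-\epsilon)<0$ and $\lim_{t\to0}F_t(y^*+\epsilon)=F(y^*+\epsilon)>0$ guarantee a threshold $t_\epsilon>0$ such that $F_t(y^*-\epsilon)<0<F_t(y^*+\epsilon)$ for every $t<t_\epsilon$. Existence of a root then follows immediately from the intermediate value theorem: fixing once and for all $\epsilon_0=1$ and setting $\bar{t}:=t_{\epsilon_0}$, continuity of $F_t$ yields, for each $t<\bar{t}$, some $y_t$ with $F_t(y_t)=0$ and $y_t\in(y^*-1,y^*+1)$.

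For the convergence claim I would use the weak monotonicity of each $F_t$ to control the location of \emph{every} root simultaneously. Fix an arbitrary $\epsilon\in(0,1]$ and take $t<t_\epsilon$. Since $F_t$ is weakly increasing, $F_t(y)\le F_t(y^*-\epsilon)<0$ for all $y\le y^*-\epsilon$, and $F_t(y)\ge F_t(y^*+\epsilon)>0$ for all $y\ge y^*+\epsilon$; hence no point outside $(y^*-\epsilon,y^*+\epsilon)$ can be a zero of $F_t$. In particular the selected root satisfies $|y_t-y^*|<\epsilon$ whenever $t<t_\epsilon$, and since $\epsilon$ was arbitrary this is precisely the statement $\lim_{t\to0}y_t=y^*$.

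There is no genuine analytic obstacle here: the result is a soft consequence of the intermediate value theorem together with monotonicity, and continuity of the limit $F$ is in fact not needed. The one point that deserves care is that the $F_t$ are only \emph{weakly} increasing, so a given $F_t$ may vanish on a whole interval rather than at an isolated point; the argument sidesteps any ambiguity in the choice of $y_t$ by noting that monotonicity confines every zero of $F_t$ to $(y^*-\epsilon,y^*+\epsilon)$, so the convergence holds no matter which root is picked.
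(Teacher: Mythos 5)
Your proof is correct and follows essentially the same route as the paper's: use strict monotonicity of $F$ to get a sign change at $y^*\pm\epsilon$, transfer it to $F_t$ via pointwise convergence, and invoke the intermediate value theorem. Your additional observation that weak monotonicity of $F_t$ confines \emph{every} zero to $(y^*-\epsilon,y^*+\epsilon)$ — so the convergence is independent of which root is selected — is a small but welcome tightening of a point the paper's proof leaves implicit.
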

\begin{proof}
Let $\delta>0$. We have to show that there exists $\bar{t}$ s.t.
$\forall t<\bar{t}$ there is a value $y_{t}$ satisfying $|y_{t}-y^{*}|<\delta$
and $F_{t}(y_{t})=0$. Since $F(y)$ is strictly increasing there
exists a positive number $C$ such that $F(y^{*}-\delta)<-C$ and
$F(y^{*}+\delta)>C$. Condition ($1$) implies that there exists $\bar{t}$
s.t. $\forall t<\bar{t}$, 
\[
|F_{t}(y^{*}+\delta)-F(y^{*}+\delta)|<C,\,\,\,\textrm{and}\,\,\,\,\,|F_{t}(y^{*}-\delta)-F(y^{*}-\delta)|<C.
\]
Hence, $F_{t}(y^{*}-\delta)<0$ and $F_{t}(y^{*}+\delta)>0$. Since
$F_{t}$ is continuous, $\exists y_{t}\in(y^{*}-\delta,y^{*}+\delta)$
s.t. $F_{t}(y_{t})=0$.
\end{proof}
\begin{lem}
\label{lem:200} Let $F_{t}(\alpha)$ be a set of twice-differentiable
strictly concave functions where $0<t\leq T$ and $\alpha\in\mathbb{R}$,
and let $F$ be a twice-differentiable strictly concave function such
that (1) $\forall\alpha,\ F(\alpha)=\lim_{t\rightarrow0}{F_{t}(\alpha)},$
and (2) $\exists\alpha^{*}\in\mathbb{R}$ such that $\alpha^{*}=\arg\max_{\alpha\in\mathbb{R}}{F(\alpha)}$.
Then, there exists $\bar{t}>0$ such that 
\[
\forall t<\bar{t},\exists\alpha_{t}\in\mathbb{R}\text{ s.t. }\alpha_{t}=\arg\max_{\alpha}F_{t}(\alpha),\,\,\,\textrm{and}\,\,\,\lim_{t\rightarrow0}{\alpha_{t}}=\alpha^{*}.
\]
\end{lem}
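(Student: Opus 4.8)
The plan is to reduce the statement about maximizers to a statement about zeros of the derivatives and then invoke Lemma~\ref{lem:standard-calculus}. Because each $F_{t}$ is twice-differentiable and strictly concave, its derivative $F_{t}'$ is continuous and strictly decreasing, so $H_{t}\equiv-F_{t}'$ is a continuous, strictly (hence weakly) increasing function; likewise $H\equiv-F'$ is continuous and strictly increasing. Optimality of $\alpha^{*}$ for the differentiable strictly concave $F$ on all of $\mathbb{R}$ gives the first-order condition $F'(\alpha^{*})=0$, i.e.\ $H(\alpha^{*})=0$, which supplies the zero $y^{*}=\alpha^{*}$ of the limit function required by that lemma. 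Since a critical point of a strictly concave differentiable function is automatically its unique global maximizer, once I produce a root $\alpha_{t}$ of $H_{t}$ I will be done: $\alpha_{t}=\arg\max_{\alpha}F_{t}(\alpha)$.

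The one hypothesis of Lemma~\ref{lem:standard-calculus} that is not immediate is the pointwise convergence $H_{t}\to H$, that is, $F_{t}'(\alpha)\to F'(\alpha)$ for every $\alpha$. This is the main obstacle, since assumption (1) only gives $F_{t}\to F$ pointwise, and pointwise convergence of functions does not transfer to their derivatives for general families. The point that rescues the argument is concavity: for a concave function the secant slopes are monotone in the step, so $F_{t}'(\alpha)$ is squeezed between a backward and a forward quotient,
\[
\frac{F_{t}(\alpha+h)-F_{t}(\alpha)}{h}\le F_{t}'(\alpha)\le\frac{F_{t}(\alpha)-F_{t}(\alpha-h)}{h}
\]
for every $h>0$. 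Letting $t\to0$ with $h$ fixed and using $F_{t}\to F$ pointwise, then letting $h\to0$ and using differentiability of $F$, both outer quotients converge to $F'(\alpha)$, pinning the $\liminf$ and $\limsup$ of $F_{t}'(\alpha)$ together; hence $F_{t}'(\alpha)\to F'(\alpha)$. (Equivalently, one may cite the standard convex-analysis fact that pointwise limits of concave functions converge in derivative at every point where the limit is differentiable.)

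With $H_{t}\to H$ in hand, I would apply Lemma~\ref{lem:standard-calculus} to the family $\{H_{t}\}$, the limit $H$, and the zero $y^{*}=\alpha^{*}$; its standing hypotheses hold because each $H_{t}$ is continuous and weakly increasing on all of $\mathbb{R}$ (as $F_{t}$ is $C^{1}$ with strictly decreasing derivative) and $H$ is strictly increasing and continuous. The lemma then yields $\bar{t}>0$ such that for every $t<\bar{t}$ there is $\alpha_{t}$ with $H_{t}(\alpha_{t})=0$, i.e.\ $F_{t}'(\alpha_{t})=0$, together with $\lim_{t\to0}\alpha_{t}=\alpha^{*}$. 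Finally, strict concavity of $F_{t}$ upgrades this critical point to the global maximizer, so $\alpha_{t}=\arg\max_{\alpha}F_{t}(\alpha)$, completing the proof. I expect the derivative-convergence step to be the only delicate part; everything downstream is a direct transcription into Lemma~\ref{lem:standard-calculus}.
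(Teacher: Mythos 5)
Your proof is correct, but it takes a genuinely different route from the paper's. You reduce the maximization problem to a root-finding problem for the first-order condition: you establish $F_{t}'(\alpha)\rightarrow F'(\alpha)$ pointwise by sandwiching $F_{t}'(\alpha)$ between the forward and backward difference quotients (a valid use of concavity, and indeed the only delicate step), then feed $H_{t}=-F_{t}'$ into Lemma~\ref{lem:standard-calculus}, and finally upgrade the resulting critical point to the global maximizer via strict concavity. The paper instead argues directly on function values: fixing $\delta>0$, it uses pointwise convergence of $F_{t}$ at only the three points $\alpha^{*}$ and $\alpha^{*}\pm\delta$ to conclude that $F_{t}(\alpha^{*})>F_{t}(\alpha^{*}\pm\delta)$ for all small $t$, and then concavity traps the maximizer of $F_{t}$ inside $(\alpha^{*}-\delta,\alpha^{*}+\delta)$. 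The paper's argument is shorter and more elementary --- it never needs convergence of derivatives, and the localization step uses only weak concavity and continuity --- whereas yours is more systematic: it unifies Lemma~\ref{lem:standard-calculus} and Lemma~\ref{lem:200} under a single root-finding scheme and proves along the way the stronger, reusable fact that pointwise limits of concave functions converge in derivative wherever the limit is differentiable. Both arguments are complete.
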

\begin{proof}
We have to show that, given $\delta>0$, there exists $\bar{t}>0$
such that $\forall t<\bar{t}$, $\exists\alpha_{t}$, which maximizes
$F_{t}(\alpha)$, and that $|\alpha_{t}-\alpha^{*}|<\delta$. Let
$\delta_{1}=\min\{F(\alpha^{*})-F(\alpha^{*}-\delta),F(\alpha^{*})-F(\alpha^{*}+\delta)\}$.
There exists $\bar{t}$ s.t. $\forall t<\bar{t}$, 
\[
|F_{t}(\alpha^{*})-F(\alpha^{*})|<\delta_{1}/3,\,\,\,\,\,|F_{t}(\alpha^{*}+\delta)-F(\alpha^{*}+\delta)|<\delta_{1}/3,\,\,\,\textrm{and}
\]
\[
|F_{t}(\alpha^{*}-\delta)-F(\alpha^{*}-\delta)|<\delta_{1}/3.
\]
Hence, $\forall t<\bar{t}$, 
\[
F_{t}(\alpha^{*})>F_{t}(\alpha^{*}-\delta)\text{ and }F_{t}(\alpha^{*})>F_{t}(\alpha^{*}+\delta).
\]
Since for all $t$, $F_{t}$ is weakly concave, there exists $\alpha_{t}\in(\alpha^{*}-\delta,\alpha^{*}-\delta)$,
which is the argmax of $F_{t}$.
\end{proof}
Next, we prove the main theorem. Let $g\in\Gamma$ be a continuous-time
random process, and let $\left(u,w\right)\in\mathcal{DM}$ be a decision
maker.
\begin{enumerate}
\item $Q_{SD}^{l}$ and $\rho$ are consistent with $f_{RP}.$ For every
$t>0$, let $F_{t}$ be  defined as follows: 
\[
F_{t}(x)=\frac{u\big(w+\mathbb{E}\left[g_{t}\right]+x\cdot t\big)-\mathbb{E}\left[u\big(w+g_{t}\big)\right]}{t}.
\]
By definition, if for some value of $x$, $F_{t}(x)=0$, then $x\cdot t=f_{RP}(\left(u,w\right),g_{t})$.
To calculate the limit of $F_{t}$ as $t$ goes to zero, it is simpler
to look at $F_{t}$ as the difference between two functions $k_{t}$
and $h_{t}$, defined by 
\[
k_{t}(x)=\frac{u(w+\mathbb{E}\left[g_{t}\right]+x\cdot t)-u(w)}{t},\,\,\,\,\textrm{and}
\]
\begin{equation}
h_{t}=\frac{\mathbb{E}\left[u\bigg(w+g_{t}\bigg)\right]-u(w)}{t}.\label{eq:h_t-1}
\end{equation}
 Clearly, 
\[
F_{t}(x)=k_{t}(x)-h_{t}
\]
for every value of x. The limit of $k_{t}(x)$ as $t$ goes to zero
is simply the derivative with respect to $t$ at $w$:
\begin{equation}
\lim_{t\rightarrow0}k_{t}(x)=u'(w)\cdot(\mu_{0}+x).\label{eq:kt}
\end{equation}
By applying Ito's lemma
\[
h_{t}=\frac{\mathbb{E}\left[\int_{0}^{t}\left(\mu_{q}u'_{q}+\frac{1}{2}\sigma_{q}^{2}u''_{q}\right)dq\right]}{t}+\frac{\mathbb{E}\left[\int u'_{q}\sigma_{q}dW\right]}{t},
\]
where $u'_{q}\equiv du(w_{q})/d(w_{q})$, $u''_{q}\equiv du^{2}(w_{q})/d^{2}(w_{q})$,
and $w_{q}=w+g_{q}$. Since we assumed that $g$ is bounded from below,
the concavity and monotonicity of $u$ implies that $u'_{q}$ is bounded.
In addition, we assumed that the $\sigma_{t}$ satisfies the square-integrability
condition and, therefore, that $E\bigg[\int_{0}^{t}\sigma_{q}^{2}dq\bigg]$
is finite. These two assumptions imply that $E\bigg[\int_{0}^{t}(u'_{q}\sigma_{q})^{2}dq\bigg]$
is finite and, therefore, that $\int_{0}^{t}u'_{q}\sigma_{t}dW$ is
a martingale; see \citet[Theorem 4.3.1. on page 134]{shreve2004stochastic}.
Hence, $h_{t}$ can be rewritten as follows:
\[
h_{t}=\frac{\mathbb{E}\left[\int_{0}^{t}\left(\mu_{q}u'_{q}+\frac{1}{2}\sigma_{q}^{2}u''_{q}\right)dq\right]}{t}.
\]
Since $\mu_{q}$, $\sigma_{q},u'_{q}$ and $u''_{q}$ are all continuous,
according to the mean-value theorem for integration, for each realization
of $g$ there exists some $x\in\left(0,t\right)$ for which
\[
\frac{\int_{0}^{t}\left(\mu_{q}u'_{q}+\frac{1}{2}\sigma_{q}^{2}u''_{q}\right)dq}{t}=\mu_{x}u'_{x}+\frac{1}{2}\sigma_{x}^{2}u''_{x}.
\]
As t goes to zero this expression converges to $\mu_{0}u'(w)+\frac{1}{2}\sigma_{0}^{2}u''(w)$.
Since for every realization of $g$ it converges to the exact same
number, the expectation of this expression also converges to this
number. Therefore,
\begin{equation}
\lim_{t\rightarrow0}{h_{t}}=\mu_{0}u'(w)+\frac{1}{2}\sigma_{0}^{2}u''(w).\label{eq:ht}
\end{equation}
It follows from Equations (\ref{eq:kt}) and (\ref{eq:ht}) that
\[
F(x)\equiv\lim_{t\rightarrow0}{F_{t}(x)}=u'(w)x-\frac{1}{2}\sigma_{0}^{2}u''(w).
\]
Let $x^{*}$ be the real number s.t. $F(x^{*})=0$, i.e.,
\[
x^{*}=\frac{1}{2}\frac{u''(w)}{u'(w)}\sigma_{0}^{2}.
\]
It is easy to see that the two conditions of Lemma \ref{lem:standard-calculus}
are satisfied: for all \emph{t}, first $F_{t}$ is continuous as it
is the sum of continuous functions, and second, $F_{t}$ is a strictly
increasing function since $u$ is an increasing function. It follows
from the lemma that there exists $\bar{t}$ and $x_{t}$ such that
$F_{t}\left(x_{t}\right)=0$ for each $t<\bar{t}$, and
\[
\lim_{t\rightarrow0}x_{t}=x^{*},
\]
where, by definition, $f_{RP}(\left(u,w\right)\text{,}g_{t})=x_{t}t$.
Note that since $\frac{u''(w)}{u'(w)}$ is negative, $x^{*}$ is negative
as well, and therefore $x^{*}$ (and $x^{*}\cdot t$ for all $t>0$)
is strictly decreasing with $\rho=-\frac{u''(w)}{u'(w)}$ and with
$Q_{SD}^{l}=\sigma_{0}$. \\
Next, we would like to show that $Q_{SD}^{l}$ and $\rho$ are consistent
with $f_{RP}.$ We begin by showing that $Q_{SD}^{l}(g)>Q_{SD}^{l}(g')$
implies that $\left(f_{RP}\right)_{g}^{\left(u,w\right)}<<\left(f_{RP}\right)_{g'}^{\left(u,w\right)}$
for any $\left(u,w\right)\in\mathcal{DM}$. Fix a decision maker $\left(u,w\right)$,
and let $x^{*}(g)\equiv x^{*}\left(\left(u,w\right),g\right)$ (and
use a similar notation for $x_{t}(g)$). Let $g,g'\in\Gamma$ be two
processes satisfying $Q_{SD}^{l}(g)>Q_{SD}^{l}(g')$. Then $x^{*}(g)<x^{*}(g')$,
and from the fact that $x_{t}\rightarrow x^{*}$it follows that there
exists $\bar{t}>0$, such that for each $t\in\left(0,\bar{t}\right),$
$x_{t}(g)\cdot t<x_{t}(g')\cdot t$, which implies that $f_{RP}\left(\left(u,w\right),g_{t}\right)<f_{RP}\left(\left(u,w\right),g_{t}'\right)$.
In addition, 
\[
lim_{t\rightarrow0}\frac{f_{RP}\left(\left(u,w\right),g_{t}\right)}{f_{RP}\left(\left(u,w\right),g'_{t}\right)}=lim_{t\rightarrow0}\frac{x_{t}\left(g_{t}\right)t}{x_{t}\left(g_{t}'\right)t}=\frac{x^{*}\left(g\right)}{x^{*}\left(g'\right)}=\frac{\left(Q_{SD}^{l}(g)\right)^{2}}{\left(Q_{SD}^{l}(g')\right)^{2}}\neq1,
\]
which proves that $\left(f_{RP}\right)_{g}^{\left(u,w\right)}<<\left(f_{RP}\right)_{g'}^{\left(u,w\right)}$.
Similarly, we show that $\rho\left(u',w'\right)>\rho\left(u'',w''\right)$
implies that $\left(f_{RP}\right)_{g}^{\left(u',w'\right)}<<\left(f_{RP}\right)_{g}^{\left(u'',w''\right)}$
for any $g\in\Gamma$. Fix a process $g\in\Gamma$, and let $x^{*}(u,w)\equiv x^{*}\left(\left(u,w\right),g\right)$
(and use a similar notation for $x_{t}(u,w)$). Let $\left(u',w'\right),\left(u'',w''\right)\in\mathcal{DM}$
be two agents satisfying $\rho\left(u',w'\right)>\rho\left(u'',w''\right)$.
Then $x^{*}\left(u',w'\right)<x^{*}\left(u'',w''\right)$, and from
the fact that $x_{t}\rightarrow x^{*}$ it follows that there exists
$\bar{t}>0$, such that for each $t\in\left(0,\bar{t}\right)$, $x_{t}(u'',w'')\cdot t<x_{t}(u',w')\cdot t$,
implying that $f_{RP}\left(\left(u'',w''\right),g_{t}\right)<f_{RP}\left(\left(u',w'\right),g_{t}\right)$.
In addition, 
\[
lim_{t\rightarrow0}\frac{f_{RP}\left(\left(u'',w''\right),g_{t}\right)}{f_{RP}\left(\left(u',w'\right),g_{t}\right)}=lim_{t\rightarrow0}\frac{x_{t}\left(u'',w''\right)t}{x_{t}\left(u',w'\right)t}=\frac{x^{*}\left(u'',w''\right)}{x^{*}\left(u',w'\right)}=\frac{\rho\left(u'',w''\right)}{\rho\left(u',w'\right)}\neq1,
\]
which proves that $\left(f_{RP}\right)_{g}^{\left(u',w'\right)}<<\left(f_{RP}\right)_{g}^{\left(u'',w''\right)}$.\\
For the other direction, given some agent $\left(u,w\right)$, if
for two processes $g$ and $g'$ there is some $\bar{t}$ s.t. $f_{RP}\left(\left(u,w\right),g_{t}\right)<f_{RP}\left(\left(u,w\right),g'_{t}\right)$
for every $0<t<\bar{t}$, and the ratio $f_{RP}\left(\left(u,w\right),g_{t}\right)/f_{RP}\left(\left(u,w\right),g_{t}'\right)$
does not go to 1 when \emph{t} goes to zero, then $x_{t}<x'_{t}$
for all $t<\bar{t}$, implying that the limits also satisfy $x^{*}<x'^{*}$
and, therefore, $\sigma_{0}>\sigma_{0}'$. Similarly, given some process
$g$, if for two agents $(u',w')$ and $(u'',w'')$ there is some
$\bar{t}$ s.t. $f_{RP}\left(\left(u',w'\right),g_{t}\right)<f_{RP}\left(\left(u'',w''\right),g_{t}\right)$
for every $0<t<\bar{t}$, and the ratio $f_{RP}\left(\left(u,w\right),g_{t}\right)/f_{RP}\left(\left(u',w'\right),g_{t}\right)$
does not go to 1 when \emph{t} goes to zero, then $x^{*}<x'^{*}$,
implying that $\left(u',w'\right)$ is locally more averse to risk
than $\left(u'',w''\right)$.
\item $Q_{VM}^{l}$ and $\rho$ are consistent with $f_{CA}$. The capital
allocation function is defined by 
\[
f_{CA}\left(\left(u,w\right),g_{t}\right)=\arg\max_{\alpha\in\mathbb{R}^{+}}\mathbb{E}\left[u\big(w+\alpha\cdot g_{t}\big)\right],
\]
where $f_{CA}\left(\left(u,w\right),g_{t}\right)$ equals infinity
if there is no internal solution. For every $t>0$, let $F_{t}$ be
the function defined as follows: 
\begin{eqnarray}
F_{t}(\alpha)=\frac{\mathbb{E}\left[u\bigg(w+\alpha g_{t}\bigg)\right]-u(w)}{t}.\label{eq:F_t_CA}
\end{eqnarray}
By Ito's lemma, 
\[
F_{t}(\alpha)=\frac{\mathbb{E}_{0}\left[\int_{0}^{t}\alpha\mu_{q}u'_{q}+\frac{1}{2}\alpha^{2}\sigma_{q}^{2}u''_{q}\ dq\right]}{t}+\frac{\mathbb{E}\bigg[\int_{0}^{t}\alpha u'_{q}\sigma_{q}dW\bigg]}{t},
\]
where $u'_{q}\equiv du(w_{q})/d(w_{q})$, $u''_{q}\equiv du^{2}(w_{q})/d^{2}(w_{q})$,
and $w_{q}=w+\alpha g_{q}$. For the same reason as in the case of
$f_{RP}$, the expression on the right-hand side $\mathbb{E}\bigg[\int_{0}^{t}\alpha u'_{q}\sigma_{q}dW\bigg]$
is zero and therefore it can be omitted. \\
We define $F(\alpha)$ to be the limit of $F_{t}(\alpha)$ as $t$
goes to zero. For the same reason as in the case of $f_{RP}$ it equals
to:
\begin{equation}
F(\alpha)\equiv\lim_{t\rightarrow0}{F_{t}(\alpha)}=\alpha\mu_{0}u'(w)+\frac{1}{2}\alpha^{2}\sigma_{0}^{2}u''(w).\label{eq:F_t_CA2}
\end{equation}
We denote by $\alpha^{*}$ the value of $\alpha$ that maximizes $F\left(\alpha\right)$:
\begin{equation}
\alpha^{*}=\arg\max_{\alpha}{F(\alpha)}=-\frac{u'(w)}{u''(w)}\frac{\mu_{0}}{\sigma_{0}^{2}}.\label{eq:alpha-star}
\end{equation}
The two conditions of Lemma \ref{lem:200} are satisfied: first, by
definition, the limit of $F_{t}$ is $F$. Second, we represent $F_{t}$
as the sum of two expressions
\[
F_{t}(\alpha)=\frac{\alpha\cdot\mathbb{E}_{o}\left[\int_{0}^{t}\mu_{q}u'_{q}dq\right]}{t}+\frac{\alpha^{2}\cdot\mathbb{E}_{o}\left[\frac{1}{2}\int_{0}^{t}\sigma_{q}^{2}u''_{q}\ dq\right]}{t}.
\]
Since we assume that $u''$ is negative, $F_{t}$ is strictly concave
with $\alpha$ and the second condition of the lemma is satisfied.
\\
By the lemma, there exists $\bar{t}>0$ such that $\alpha_{t}$ maximizes
$F_{t}$ for all $t<\bar{t}$, and\footnote{\label{fn:The-analysis-implies}The analysis implies that $g_{t}$
is a ``gamble'' for each $t<\bar{t}$. To see that, note that we
have shown that for every process $g$, and for every strictly concave
utility function, there exists $\bar{t}$ such that for every $t<\bar{t}$,
the solution of the maximization problem is internal. This implies
that for every such $t$, $\boldsymbol{E}[g_{t}]>0$. Otherwise, a
risk-averse agent would be better off by choosing $\alpha_{t}=0$,
contradicting our result here that $\alpha_{t}>0$ for a sufficiently
short time $t$. Similarly, the analysis implies that $P(g_{t}<0)>0$
for a sufficiently short time $t$. Otherwise, for every $\alpha_{t}$
and $\epsilon>0$, $\left(\alpha_{t.}+\epsilon\right)g_{t}$ would
first-order stochastically dominate $\alpha_{t}g_{t}$ and therefore
any agent would be better off enlarging any given $\alpha_{t.}$,
which implies that the solution is not internal, contradicting our
result that some finite $\alpha_{t}>0$ maximizes $F_{t}$. These
two properties of $g_{t}$ imply that $g_{t}$ is a gamble.}
\[
\lim_{t\rightarrow0}\alpha_{t}=\alpha^{*}.
\]
Note that the limit $\alpha^{*}$ is strictly decreasing with $\rho=-u''(w)/u'(w)$,
and with $Q_{VM}^{l}=\sigma_{0}^{2}/\mu_{0}$. \\
Next we would like to show that $Q_{VM}^{l}$ and $\rho$ are consistent
with $f_{CA}$, where by definition $f_{CA}\left(\left(u,w\right),g_{t}\right)=\left(f_{CA}\right)_{g}^{\left(u,w\right)}(t)=\alpha_{t}(\left(u,w\right),g_{t}).$
For the first direction, we have to show that if for two processes
$g,g'\in\Gamma$ , $Q_{VM}^{l}(g)>Q_{VM}^{l}(g')$, then $\left(f_{CA}\right)_{g}^{\left(u,w\right)}<<\left(f_{CA}\right)_{g'}^{\left(u,w\right)}$.
Indeed, $Q_{VM}^{l}(g)>Q_{VM}^{l}(g')$ implies that $\alpha^{*}(g)<\alpha^{*}(g')$,
and from the convergence of $\alpha_{t}$ it follows that there exists
$\bar{t}>0$, such that for each $t\in\left(0,\bar{t}\right)$, $\alpha_{t}(g)<\alpha_{t}(g')$.
Since $\alpha^{*}(g)<\alpha^{*}(g')$, it follows that $\lim_{t\rightarrow0}\alpha_{t}(g)/\alpha_{t}(g')\neq1$
and therefore that $\left(f_{CA}\right)_{g}^{\left(u,w\right)}<<\left(f_{CA}\right)_{g'}^{\left(u,w\right)}$.
Similarly, if for two agents $\left(u',w'\right)$ and $\left(u'',w''\right)$,
$\rho\left(u',w'\right)>\rho\left(u'',w''\right)$, then $\alpha^{*}(u',w')<\alpha^{*}(u'',w'')$,
and from the convergence of $\alpha_{t}$ it follows that there exists
some $\bar{t}>0$, such that for each $t\in\left(0,\bar{t}\right)$,
$\alpha_{t}(u',w')<\alpha_{t}(u'',w'')$. Since $\alpha^{*}$ (the
limit of $\alpha_{t}$) is positive, it follows that $\lim_{t\rightarrow0}\alpha_{t}(u',w')/\alpha(u'',w'')\neq1$
and therefore that $\left(f_{CA}\right)_{g}^{\left(u',w'\right)}<<\left(f_{CA}\right)_{g}^{\left(u'',w''\right)}$.\\
For the other direction, let $g,g'\in\Gamma$ be two processes for
which, for any decision maker $\left(u,w\right)$, $\left(f_{CA}\right)_{g}^{\left(u,w\right)}<<\left(f_{CA}\right)_{g'}^{\left(u,w\right)}$.
Indeed, the limits of $f_{CA}\left(\left(u,w\right),g_{t}\right)$
and $f_{CA}\left(\left(u,w\right),g'_{t}\right)$ when $t$ goes to
zero satisfy $\alpha^{*}(g')>\alpha^{*}(g)$ and, therefore, $Q_{VM}^{l}(g)>Q_{VM}^{l}(g')$.
Similarly, let $(u,w)$ and $(u',w')$ be two decision makers for
which $\left(f_{CA}\right)_{g}^{\left(u',w'\right)}<<\left(f_{CA}\right)_{g}^{\left(u'',w''\right)}$.
Indeed, the limits of $f_{CA}\left(\left(u,w\right),g_{t}\right)$
and $f_{CA}\left(\left(u,w\right),g'_{t}\right)$ when $t$ goes to
zero satisfy $\alpha^{*}(u',w')<\alpha^{*}(u'',w'')$ and, therefore,
$\rho\left(u',w'\right)>\rho\left(u'',w''\right)$.
\item $Q_{IS}^{l}$ and $\rho$ are consistent with $f_{CE}$.  For every
$t>0$, let $F_{t}$ be  defined as follows:
\[
F_{t}(z)=\frac{u(w+z\cdot t)-\mathbb{E}\left[u\bigg(w+\alpha g_{t}\bigg)\right]}{t}.
\]
It is easy to see that if $\alpha$ is the optimal allocation and
$F_{t}(z)=0$ then $z\cdot t=f_{CA}(\left(u,w\right),g_{t})$. To
calculate the limit of $F_{t}$ as $t$ goes to zero, it is simpler
to look at $F_{t}$ as the difference between two functions $k_{t}$
and $h_{t}$, defined by 
\[
k_{t}(z)=\frac{u(w+zt)-u(w)}{t},\,\,\,\,\textrm{and}
\]
\begin{equation}
h_{t}=\frac{\mathbb{E}\left[u\bigg(w+\alpha g_{t}\bigg)\right]-u(w)}{t}.\label{eq:h_t}
\end{equation}
 Clearly, 
\[
F_{t}(z)=k_{t}(z)-h_{t}
\]
for every value of $z$. The limit of $k_{t}(z)$ as $t$ goes to
zero is simply the derivative:
\[
\lim_{t\rightarrow0}k_{t}(z)=u'(w)\cdot z.
\]
Using Ito's lemma, and taking the limit (as we did in Equations \ref{eq:F_t_CA}
and \ref{eq:F_t_CA2}), we get
\[
\lim_{t\rightarrow0}{h_{t}}=\alpha\mu_{0}u'(w)+\frac{1}{2}\alpha^{2}\sigma_{0}^{2}u''(w).
\]
Recall that according to Equation \eqref{eq:alpha-star}, 
\[
\alpha^{*}=-\frac{u'(w)}{u''(w)}\frac{\mu_{0}}{\sigma_{0}^{2}}.
\]
Plugging $\alpha=\alpha^{*}$ into $h_{t}$, we get 
\[
\lim_{t\rightarrow0}h_{t}=-\frac{\left(u'\left(w\right)\right)^{2}\mu_{0}^{2}}{u''\left(w\right)\sigma^{2}}+\frac{1}{2}\frac{\left(u'\left(w\right)\right)^{2}\mu_{0}^{2}}{u''\left(w\right)\sigma^{2}}=-\frac{1}{2}\frac{\left(u'\left(w\right)\right)^{2}\mu_{0}^{2}}{u''\left(w\right)\sigma^{2}}.
\]
We define $F(z)$ to be the limit of $F_{t}(z)$, where $t$ goes
to zero: 
\begin{eqnarray*}
F(z) & \equiv & \lim_{t\rightarrow0}{F_{t}(z)}=\lim_{t\rightarrow0}{k_{t}(z)}-\lim_{t\rightarrow0}{h_{t}(z)}=u'(w)z+\frac{1}{2}\frac{(u_{0}')^{2}}{u_{0}''}\big(\frac{\mu_{0}}{\sigma_{0}}\big)^{2}.
\end{eqnarray*}
We define $z^{*}$ to be the value that results in $F(z^{*})=0$:
\[
z^{*}=-\frac{1}{2}\frac{u'(w)}{u''(w)}\big(\frac{\mu_{0}}{\sigma_{0}}\big)^{2}.
\]
For every $t$, $F_{t}(z)$ is continuous and strictly increasing
satisfying the conditions of Lemma \ref{lem:standard-calculus}, therefore
by the lemma there is $\bar{t}$ such that $F_{t}(z_{t})=0$ for every
$t\in(0,\bar{t})$, implying that $z_{t}\cdot t$ is the certainty
equivalent of the optimal investment in the gamble with horizon $t$,
and that
\[
\lim_{t\rightarrow0}z_{t}=z^{*}.
\]
It is easy to see that $z^{*}$ (and therefore $z^{*}t$ for all $t$)
is strictly decreasing with $\rho=-\frac{u''(w)}{u'(w)}$ and with
$Q_{IS}^{l}=\frac{\sigma}{\mu}$.\\
Next, we would like to show that $Q_{IS}^{l}$ and $\rho$ are consistent
with $f_{CE}$. We begin by showing that $Q_{IS}^{l}(g)>Q_{IS}^{l}(g')$
implies that $\left(f_{CE}\right)_{g}^{\left(u,w\right)}<<\left(f_{CE}\right)_{g'}^{\left(u,w\right)}$
for any $\left(u,w\right)\in\mathcal{DM}$. Fix a decision maker $\left(u,w\right)$,
and let $z^{*}(g)\equiv z^{*}\left(\left(u,w\right),g\right)$ (and
use a similar notation for $z_{t}(g)$). Let $g,g'\in\Gamma$ be two
processes satisfying $Q_{IS}^{l}(g)>Q_{IS}^{l}(g')$. Then $z^{*}(g)<z^{*}(g')$,
and from the fact that $z_{t}\rightarrow z^{*}$ it follows that there
exists $\bar{t}>0$, such that for each $t\in\left(0,\bar{t}\right),$
$z_{t}(g)\cdot t<z_{t}(g')\cdot t$, which implies that $f_{CE}\left(\left(u,w\right),g_{t}\right)<f_{CE}\left(\left(u,w\right),g_{t}'\right)$.
In addition, 
\[
lim_{t\rightarrow0}\frac{f_{CE}\left(\left(u,w\right),g_{t}\right)}{f_{CE}\left(\left(u,w\right),g'_{t}\right)}=lim_{t\rightarrow0}\frac{z_{t}\left(g_{t}\right)t}{z_{t}\left(g_{t}'\right)t}=\frac{z^{*}\left(g\right)}{z^{*}\left(g'\right)}=\frac{\left(Q_{IS}^{l}(g)\right)^{2}}{\left(Q_{IS}^{l}(g')\right)^{2}}\neq1,
\]
which proves that $\left(f_{CE}\right)_{g}^{\left(u,w\right)}<<\left(f_{CE}\right)_{g'}^{\left(u,w\right)}$.
Similarly, we show that $\rho\left(u',w'\right)>\rho\left(u'',w''\right)$
implies that $\left(f_{CE}\right)_{g}^{\left(u',w'\right)}<<\left(f_{CE}\right)_{g}^{\left(u'',w''\right)}$
for any $g\in\Gamma$. Fix a process $g\in\Gamma$, and let $z^{*}(u,w)\equiv z^{*}\left(\left(u,w\right),g\right)$
(and use a similar notation for $z_{t}(u,w)$). Let $\left(u',w'\right),\left(u'',w''\right)\in\mathcal{DM}$
be two agents satisfying $\rho\left(u',w'\right)>\rho\left(u'',w''\right)$.
Then $z^{*}\left(u',w'\right)<z^{*}\left(u'',w''\right)$, and from
the fact that $z_{t}\rightarrow z^{*}$ it follows that there exists
$\bar{t}>0$, such that for each $t\in\left(0,\bar{t}\right)$, $z_{t}(u'',w'')\cdot t<z_{t}(u',w')\cdot t$,
implying that $f_{CE}\left(\left(u'',w''\right),g_{t}\right)<f_{CE}\left(\left(u',w'\right),g_{t}\right)$.
In addition, 
\[
lim_{t\rightarrow0}\frac{f_{CE}\left(\left(u'',w''\right),g_{t}\right)}{f_{CE}\left(\left(u',w'\right),g_{t}\right)}=lim_{t\rightarrow0}\frac{z_{t}\left(u'',w''\right)t}{z_{t}\left(u',w'\right)t}=\frac{z^{*}\left(u'',w''\right)}{z^{*}\left(u',w'\right)}=\frac{\rho\left(u'',w''\right)}{\rho\left(u',w'\right)}\neq1,
\]
which proves that $\left(f_{CE}\right)_{g}^{\left(u',w'\right)}<<\left(f_{CE}\right)_{g}^{\left(u'',w''\right)}$.\\
For the other direction, given some agent $\left(u,w\right)$, if
for two processes $g$ and $g'$ there is some $\bar{t}$ s.t. $f_{CE}\left(\left(u,w\right),g_{t}\right)<f_{CE}\left(\left(u,w\right),g'_{t}\right)$
for every $0<t<\bar{t}$, and the ratio $f_{CE}\left(\left(u,w\right),g_{t}\right)/f_{CE}\left(\left(u,w\right),g_{t}'\right)$
does not go to 1 when \emph{t} goes to zero, then $x_{t}<x'_{t}$
for all $t<\bar{t}$, implying that the limits also satisfy $z^{*}<z'^{*}$
and, therefore, $\sigma_{0}>\sigma_{0}'$. Similarly, given some process
$g$, if for two agents $(u',w')$ and $(u'',w'')$ there is some
$\bar{t}$ such that $f_{CE}\left(\left(u',w'\right),g_{t}\right)<f_{CE}\left(\left(u'',w''\right),g_{t}\right)$
for every $t\in\left(0,\bar{t}\right)$, and the ratio $f_{CE}\left(\left(u'',w''\right),g_{t}\right)/f_{CE}\left(\left(u',w'\right),g_{t}\right)$
does not go to 1 when \emph{t} goes to zero, then $z^{*}<z'^{*}$,
implying that $\left(u',w'\right)$ is locally more averse to risk
than $\left(u'',w''\right)$.
\end{enumerate}

\end{document}